\pgfplotsset{compat=1.17}
\title{Coded Shotgun Sequencing}
\author{Aditya Narayan}
\newcommand*\Y{\mathcal{Y}}
\newtheorem{theorem}{Theorem}
\newtheorem{lemma}{Lemma}
\newtheorem{defn}{Definition}
\newcommand{\aln}[1]{\begin{align*}#1\end{align*}}
\newcommand{\al}[1]{\begin{align}#1\end{align}}
\newcommand{\1}{{\bf 1}}
\newcommand{\ep}{\epsilon}
\newcommand{\Ber}{\text{Ber}}
\renewcommand{\P}{{\mathcal P}}
\newcommand{\E}{{\mathcal E}}
\newcommand{\B}{{\mathcal B}}
\newif\ifdraft
\begin{document}

\title{Coded Shotgun Sequencing
}
\author{
Aditya~Narayan~Ravi, Alireza~Vahid, Ilan~Shomorony
\thanks{Aditya Narayan Ravi and Ilan Shomorony are with the Electrical and Computer Engineering Department of the University of Illinois, Urbana-Champaign, IL, USA. Email: {\sffamily anravi2@illinois.edu,ilans@illinois.edu}.}
\thanks{Alireza Vahid is with the Electrical Engineering Department of the University of Colorado Denver, Denver, CO, USA. Email: {\sffamily alireza.vahid@ucdenver.edu}.}
}


 \maketitle


\begin{abstract}
Most DNA sequencing technologies are based on the shotgun paradigm: many short reads are obtained from random unknown locations in the DNA sequence. 
A fundamental question, studied in~\cite{MotahariDNA}, is what read length and coverage depth (i.e., the total number of reads) are needed to guarantee reliable sequence reconstruction. Motivated by DNA-based storage, we study the coded version of this problem; i.e., the scenario where the DNA molecule being sequenced is a codeword from a predefined codebook. 
Our main result is an exact characterization of the capacity of the resulting \emph{shotgun sequencing channel} as a function of the read length and coverage depth. In particular, our results imply that, while in the uncoded case, $O(n)$ reads of length greater than $2 \log n$ are needed for reliable reconstruction of a length-$n$ binary sequence, in the coded case, only $O(n/\log n )$ reads of length greater than $\log n$ are needed for the capacity to be arbitrarily close to $1$.
\end{abstract}

\begin{IEEEkeywords}
Shotgun Sequencing, DNA Storage, Channel Capacity, Data Storage, DNA Sequencing
\end{IEEEkeywords}


\section{Introduction}
\label{Section:Introduction}

Over the last decade, advances in 
DNA sequencing technologies have driven down the time and cost of acquiring 
biological data tremendously.
At the heart of this sequencing revolution was the development of \emph{high-throughput shotgun sequencing} platforms.
Rather than attempting to read a long DNA molecule from beginning to end, these platforms extract a large number of short reads from random locations of the target DNA sequence (e.g., the genome of an organism), in a massively parallel fashion.
Sequencing must then be followed by an \emph{assembly} step, where the reads are merged together based on regions of overlap with the intention of reconstructing the original DNA sequence.

In the context of this shotgun sequencing pipeline, it is natural to ask when it is possible, from an information-theoretic standpoint, to reconstruct a sequence from a random set of its substrings. 
More precisely, suppose we observe $K$ random reads (i.e., substrings) of length $L$ from an unknown length-$n$ sequence $x^n$.
What conditions on $x^n$, $K$ and $L$ guarantee that $x^n$ can be reliably reconstructed from the observed reads?
This problem was first studied from an information-theoretic point of view by Motahari et al.~\cite{MotahariDNA}. 
The authors considered the asymptotic regime where $n \to \infty$ and the read length $L$ scales as
\al{ \label{eq:readlength}
L = \bar{L} \log n,
}
for a constant $\bar L$.
They also defined $c = \frac{KL}{n}$ to be the \emph{coverage depth};
i.e., the average number of times each symbol in $x^n$ is sequenced.
This appropriate scaling of the read length allowed the authors of \cite{MotahariDNA} to show a surprising critical phenomenon: 
if $x^n$ is an i.i.d.~$\Ber(1/2)$ sequence, when $\bar L < 2$, reconstruction is impossible for any coverage depth $c$, but if $\bar L > 2$, reconstruction is possible as long as the coverage depth is at least the Lander-Waterman coverage
\al{
c_{LW} = \ln\left( \frac{n}{\ep} \right).
}
The Lander-Waterman coverage \cite{LanderWaterman} is the minimum coverage needed to guarantee that all symbols in $x^n$ are sequenced at least once with probability $1-\ep$.
The result in \cite{MotahariDNA} established a \emph{feasibility region} for the shotgun sequencing problem, illustrated in blue in Figure~\ref{fig:regions}.
Notice that the number of reads required is linear in $n$ since
\aln{
K = \frac{n}{L} \cdot c_{LW} = 
\frac{n}{L} \ln \left( \frac{n}{\ep} \right) = \frac{n \ln \left( n /\ep \right)}{\bar L \log n} = \Theta(n).
}

One key aspect about the framework studied in \cite{MotahariDNA} is that the sequence $x^n$ is chosen ``by nature'' (which can be modeled as a random process as in \cite{MotahariDNA} or as an unknown deterministic sequence as later done in \cite{BBT,shomorony2016fundamental}).
However, in recent years, significant advances in DNA synthesis technologies have enabled the idea of storing data in DNA,
and several groups demonstrated
 working DNA-based storage systems~\cite{church_next-generation_2012,goldman_towards_2013,grass_robust_2015,yazdi_rewritable_2015,erlich_dna_2016,organick_scaling_2017,antkowiak_low_2020}.
In these systems, information was 
encoded into DNA molecules via state-of-the-art synthesis techniques, and later retrieved via sequencing.
This emerging technology motivates the following question: How do the fundamental limits of shotgun sequencing from \cite{MotahariDNA} change in the coded setting where $x^n$ is chosen from a codebook?

\begin{figure}
\centering 
\begin{tikzpicture}
\begin{axis}[grid=none,
axis lines=middle,
extra x tick style={ticklabel style={fill=white,font=\small}},
extra x tick labels={$1$,$2$},
extra y tick labels={$\Omega\left(\frac{n}{\log{n}}\right)$,$\Theta\left(n\right)$},
extra y tick style={ticklabel style={fill=white,font=\small}},
xmin=0,xmax=4,xtick={0,1,2,3},xticklabels={\empty},
ymin=0,ymax=3,ytick={0,1,2},yticklabels={\empty},
extra x ticks={1,2}, extra y ticks={1,2},
xlabel=\(\Bar{L}\),ylabel=\(K\),
samples=200]
\addplot[dashed,very thick,color=red,name path=A] coordinates{(1,1) 
(6,1)};
\addplot[dashed,very thick,color=red,name path=B] coordinates{(1,0) 
(1,6)};
\addplot[dashed,very thick,color=blue,name path=C] coordinates{(2,2) 
(6,2)};
\addplot[dashed,very thick,color=blue,name path=D] coordinates{(2,0)(2,6)};
\addplot [red!30] fill between [
        of=A and B,
    ];
\addplot [blue!30] fill between [
        of=C and D,
];
\end{axis}
\end{tikzpicture}
\caption{The blue region describes a feasible region where the normalized read length $\bar L$ and the number of reads $K$ satisfies conditions needed for perfect sequence reconstruction in the uncoded setting \cite{MotahariDNA}.
In the coded setting studied in this paper, the requirements for the capacity to be $1$ are significantly less stringent: $\bar L > 1$ and $K$ growing faster than $n/\log n$ suffices.
\label{fig:regions}}
\end{figure}
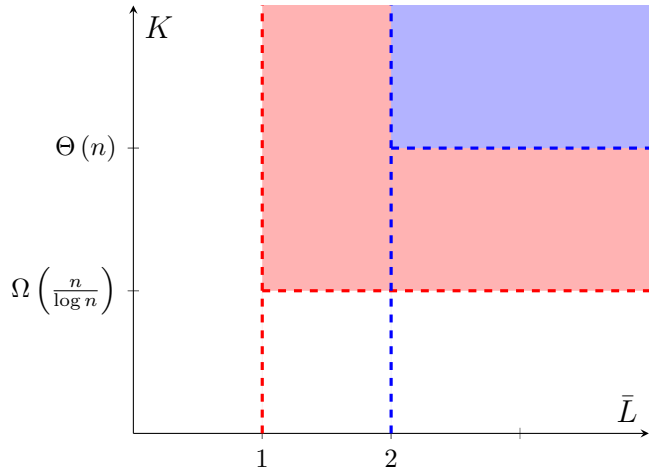


Motivated by this question, in this paper we introduce the 
Shotgun Sequencing Channel (SSC). 
As illustrated in Figure~\ref{fig:ChannelModel}(a), the channel input is a (binary) length-$n$ sequence $x^n$,
and the channel output are $K$ random reads of length $L$ from $x^n$.
Each read is assumed to be drawn independently and uniformly at random from $x^n$ and we consider the read length scaling in (\ref{eq:readlength}).
Notice that this is essentially the same setup as in \cite{MotahariDNA}, except that the ``genome'' $x^n$ is chosen from a codebook rather than decided by nature.
Our goal is to characterize the capacity of this channel.


\begin{figure}[ht!]
\centering
\includegraphics[clip,width=0.85\linewidth]{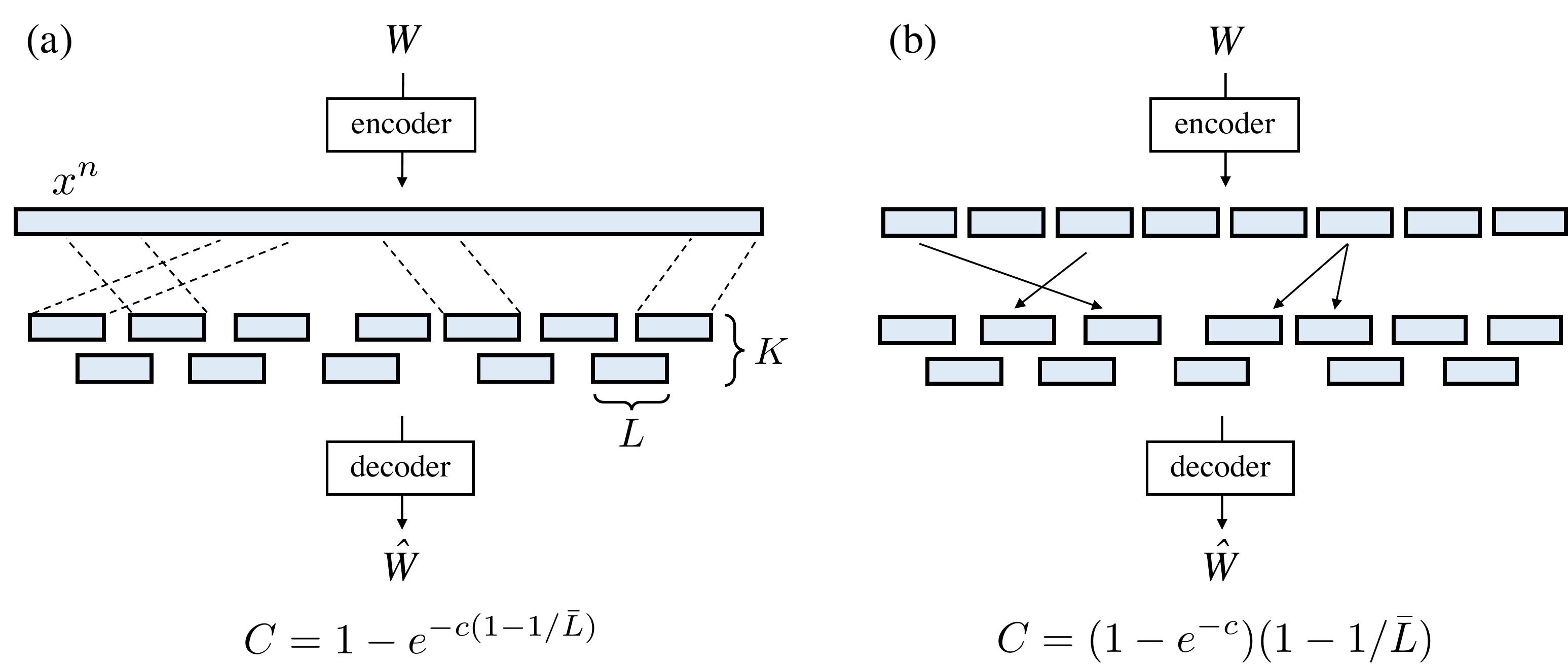}
\caption{Comparison between the (a) Shotgun Sequencing Channel (SSC) and the (b) Shuffling-Sampling channel from \cite{DNAStorageISIT} and the corresponding capacity expressions. 
The input to the SSC is a single (binary) string $x^n$ and the output are $K$ random substrings of length $L$.
In the Shuffling-Sampling channel, the input are $M$ strings of length $L$, which are sampled with replacement to produce the channel output.
Both capacity expressions can be written in terms of the expected coverage depth $c$ and the normalized read length $\bar L$.
\label{fig:ChannelModel}}
\end{figure}


In order to build intuition it is worth considering the related setting of the \emph{shuffling-sampling channel} \cite{DNAStorageISIT}, illustrated in Figure~\ref{fig:ChannelModel}(b).
In this case the input are $M$ strings of length $L$, and the output are $K$ strings, each chosen uniformly at random from the set of input strings.
If we define the coverage depth for this setting as
$c = \frac{KL}{ML} = K/M$,
%
%
%
%
the result in \cite{DNAStorageISIT} implies that, for $\bar L > 1$, the capacity of this channel is
\begin{align} \label{eq:capshuff}
    C_{\text{shuf}} = \left(1-e^{-c}\right)\left(1-1/\Bar{L}\right),
\end{align}
and $C_{\rm shuf} = 0$ for $\bar L \leq 1$.
The term $(1-e^{-c})$ captures the loss due to unseen input strings and $(1-1/\bar{L})$ captures the loss due to the unordered nature of the output strings (which becomes more severe the shorter the strings are).

Intuitively, the capacity of the SSC should depend on $c$ and $\bar{L}$ in a similar way as in (\ref{eq:capshuff}).
The expected fraction of symbols in $x^n$ that are read at least once can be shown to be $1-e^{-c}$, which provides an upper bound to the capacity of the SSC.
But it is not clear a priori which of the channels in Figure~\ref{fig:ChannelModel} should have the larger capacity.
Our main result establishes that, for $\bar L \geq 1$, the capacity of the SSC is given by
\begin{align}
    C_{\rm SSC} = 1 - e^{-c\left(1-\frac{1}{\Bar{L}}\right)}.
\end{align}
Notice that the dependence on $\bar L$ appears as the term $(1-1/\bar{L})$ in the exponent and, 
as $c \to \infty$, $C_{\rm SSC} \to 1$ for any $\bar{L} > 1$.
This is in contrast to the shuffling-sampling channel, where $C_{\rm shuf} \to 1-1/\bar{L}$ as we increase the coverage depth $c$ to infinity.
Therefore, even in the high coverage depth regime, if $\bar{L} \approx 1$, $C_{\text{shuf}} \approx 0$. 
Furthermore, it can be verified that $C_{\rm shuf} < C_{\rm SSC}$ for any $c$ and $\bar{L}$, establishing the advantage (from a capacity standpoint) of storing data on a long molecule of DNA as opposed to on many short molecules.



The above result also allows for an interesting comparison with the uncoded setting (i.e., the genome sequencing problem) of \cite{MotahariDNA}.
When we allow coding over the string, the critical threshold on the read length reduces to $\Bar{L} > 1$, compared to  $\Bar{L} > 2$ for the uncoded setting.
Moreover, in the SSC it is possible to achieve a capacity close to $1$ by having the coverage depth be a large constant, while in the uncoded case the $c$ needs to grow as $\log n$.

%

\textbf{Background and Related Work:}
The first prototypes of DNA storage systems were presented in 2012 and 2013, when groups lead by Church~\cite{church_next-generation_2012} and Goldman~\cite{goldman_towards_2013} independently stored about a megabyte of data in DNA. 
In 2015, Grass et al.~\cite{grass_robust_2015} demonstrated that millenia long storage times are possible by protecting the data  using error-correcting codes. 
Yazdi et al~\cite{yazdi_rewritable_2015} showed how to selectively access parts of the stored data, and in 2017, Erlich and Zielinski~\cite{erlich_dna_2016} demonstrated that practical DNA storage can achieve very high information densities. In 2018, Organick et al.~\cite{organick_scaling_2017} scaled up these techniques and stored about 200 megabytes of data.
We point out that, in all of these prototypes, data is stored on many short DNA molecules, as opposed to storing it in a single very-long DNA molecule.
This is because synthesizing long strands of DNA is prohibitively expensive with current technology. Hence, this work seeks to answer what storage rates could be achieved if we were able to synthesize long DNA molecules at reasonable costs.

The prospect of practical DNA-based storage has motivated a significant amount of research into its theoretical underpinnings.
In particular, the idea of coding over a set of short strings that are then shuffled and sampled was studied in several settings~\cite{DNAStorageISIT,noisyshuffling,DNAStorageIT,LenzAnchor,lenz2018coding,lenz_upper_2019,lenz2020achieving}.
Many works have also focused on developing explicit codes tailored to specific aspects of DNA storage. These include DNA synthesis constraints such as sequence composition  \cite{kiah_codes_2016,yazdi_rewritable_2015,erlich_dna_2016}, the asymmetric nature of the DNA sequencing error channel \cite{gabrys_asymmetric_2015}, 
the need for codes that correct insertion errors \cite{sala_insertions_2016}, 
and the need for techniques to allow random access \cite{yazdi_rewritable_2015}.

The problem of reconstructing strings from a set of its subsequences has also been considered in various settings.
Several works considered studied the problem of genome sequencing and assembly from an information-theoretic standpoint \cite{MotahariDNA,BBT,shomorony2016fundamental,shomorony2016information}.
The trace reconstruction problem is another related setting where one observes (non-contiguous) subsequences of the input sequence and attempt to reconstruct it \cite{holenstein2008trace,srinivasavaradhan2018maximum,cheraghchi2019coded}.

A very relevant related setting is the problem of reconstructing a string from its substring spectrum \cite{gabrys2018unique,marcovich2019reconstruction}. Our setting is similar to this problem in two ways: (i) that both problems look at trying to reconstruct strings from substrings of fixed lengths, in general with overlaps, and (ii) the string is chosen from a codebook. However, these works have focused on the setting where a noisy substring spectrum (the multi-set of all substrings) is available, while we consider that a fixed number of reads (or substrings) are extracted from random locations.
Moreover, these works proposed explicit code constructions, while we focus on the problem of capacity characterization.






\section{Problem Setting}
\label{Section:ProblemFormulation}

We consider the Shotgun Sequencing Channel (SSC), shown in Figure~\ref{fig:ChannelModel}(a). 
The transmitter sends a length-$n$ binary string $X^n \in \{0,1\}^n$, corresponding to a message $W \in [1:2^{nR}]$. The channel output is a set of length-$L$ binary strings $\Y$. 
The channel chooses $K$ starting points  uniformly at random, represented by the random vector $T^K \in [1:n]^K$. 
The vector $T^K$ is assumed to be sorted in a non-decreasing order. 
Length-$L$ reads are then sampled with $T_i$, $i=1,\dots,K$  as their starting points. 
We allow the reads to ``wrap around'' $X^n$; i.e., if for any $i$, $T_i + L > n$, we concatenate bits from the start of $X^n$ to form length-$L$ reads. 
For example if $T_i = n - 2$ and $L =5$, then the read $\vec{Y}$ associated with this starting location is
\begin{align*}
    \vec{Y} = [X_{n-2},X_{n-1},X_n,X_1,X_2].
\end{align*}
Notice that the channel effectively treats the codeword as circular, equivalent to the circular DNA model considered in \cite{MotahariDNA}.
The unordered multi-set $\Y = \{\vec{Y}_1,\vec{Y}_2,\dots,\vec{Y}_K\}$ of reads resulting from this sampling process is the channel output.

The expected number of times a given symbol from $X^n$ is sequenced is defined as the coverage depth $c$. This is given by the expression
\begin{align*}
    c := \frac{KL}{n}.
\end{align*}

We focus on the regime where the length of the reads sampled is much smaller than the block length $n$.
In particular, as shown in previous works \cite{DNAStorageISIT,TPCglobecom,TPCLP,shomorony2021torn,nassirpour2020embedded}, the regime $L = \Theta(\log{n})$ is of interest from a capacity standpoint.
Hence, as in \cite{MotahariDNA}, we fix a normalized length $\Bar{L}$ and define
\begin{align*}
    L := \Bar{L}\log{n}.
\end{align*}
Notice that, in this regime, the total number of reads is
\begin{align*}
    K = \frac{c n}{\Bar{L}\log{n}} = \Theta\left(\frac{n}{\log{n}}\right),
\end{align*}
which is a $\log{n}$ factor smaller than what is needed in the uncoded setting from \cite{MotahariDNA}.

We define an achievable rate in the usual way.
More precisely, a $(2^{nR},n)$-code consists of a message set $[1:2^{nR}]$, an encoder that assigns codeword $x^n(W)$ to any $W \in [1:2^{nR}]$, and a decoder that assigns an estimate $\hat W(\Y) \in [1:2^{nR}]$.
A rate $R$ is achievable if there exists a sequence of $(2^{nR},n)$ codes whose error probability tends to zero as $n \to \infty$.
The capacity $C$ of the SSC is the supremum over all achievable rates $R$.


\vspace{2mm}
\noindent \textbf{Notation:} 
$\log(\cdot)$ represents the logarithm in base $2$. For functions $a(n)$ and $b(n)$, we say $a(n) = o(b(n))$ or $b(n) = \Omega(a(n))$ if $a(n)/b(n) \to 0$ as $n \to \infty$. Further, we say that a function $a(n) = \Theta(f(n))$ if there exist $n_0 \in \mathbb{N}, k_1,k_2 \in (0,\infty)$, such that  $k_1f(n) \leq a(n) \leq k_2f(n)$ $\forall n \geq n_0$. 
For an event $A$, we let $\mathbf{1}_A$ be the binary indicator of $A$. For a set $B$, $|B|$ indicates the cardinality of that set.


\section{Main Results}
\label{Section:MainResults}

The DNA storage problem considered here has two important properties: (i) the reads in general overlap with each other and (ii) the set of reads is unordered. 
Property (i) was explored in the context of genome sequencing \cite{MotahariDNA}.
Intuitively, the overlaps between the reads allow them to be merged in order to reconstruct longer substrings of $X^n$.
Property (ii) has been analyzed before in the context of several works on DNA storage. 
In particular, in the context of the shuffling-sampling channel from \cite{DNAStorageISIT}, illustrated in Figure~\ref{fig:ChannelModel}(b), 
the input to the channel is a set of strings of length $L$, and the capacity is given by $C_{\rm shuf} = (1-e^{-c})(1-1/\bar L)$.

Notice that, in the case of the shuffling-sampling channel, the output strings have no overlaps (they can only be non-overlapping or identical).
In the context of the SSC, on the other hand, the overlaps can provide useful information to fight the lack of ordering of the output strings.
Our main result captures the capacity gains that can be achieved by optimally exploiting the overlaps.
Specifically, we characterize the capacity of the SSC for any coverage depth $c$ and normalized read length $\bar L$.

\begin{theorem}\label{Thm:Main}
For any $c > 0$ and $\bar L > 0$, the capacity of the Shotgun Sequencing Channel is 
\begin{align}\label{eq:SSCCapacity}
    C = \left(1 - e^{-c\left(1 - 1/\Bar{L}\right)}\right)^{+}.
\end{align}
\end{theorem}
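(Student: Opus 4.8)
The plan is to prove the theorem in two parts: an achievability argument that every rate below $1-e^{-c(1-1/\bar L)}$ is achievable, and a converse that no higher rate is. The degenerate regime $\bar L\le 1$ (where the claimed capacity is $0$) should fall out of the same analysis as a boundary case, so I focus the discussion on $\bar L>1$.

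For achievability I would use an i.i.d.\ uniform random codebook together with a \emph{joint} decoder: on observing the multiset $\Y$, output the unique $W$ such that every read in $\Y$ is a (cyclic) substring of $x^n(W)$, and declare an error otherwise. Since positions may be reused, the true codeword always passes this test, so the analysis reduces to bounding, for a fixed wrong codeword $x^n(W')$ (which is independent of $\Y$), the probability that every read of $\Y$ is a substring of $x^n(W')$. The key structural step is to note that the $K$ read start points are uniform on a length-$n$ circle, so their spacings behave like $\mathrm{Exp}(K/n)$ variables, and to group the reads into \emph{$\ell$-islands}: maximal runs of consecutive reads whose neighbours overlap by at least $\ell:=(1+\delta)\log n$. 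Each $\ell$-island is a contiguous substring $Z_s$ of $x^n(W)$ of length $M_s$, and the probability that all reads of that island sit inside a random $x^n(W')$ is at most $n\,2^{-M_s}$ up to factors absorbed by the slack $\delta$ (there are $n$ candidate windows, and "splitting'' an island at an overlap of length $\ge\ell\ge\log n$ never lowers the cost). Multiplying over the roughly independent islands, the error exponent is governed by $\sum_s(M_s-\log n)$, and the crux of the proof is the identity
\[
  \mathbb{E}\Big[\sum_s (M_s-\log n)\Big] = n\big(1-e^{-c(1-1/\bar L)}\big)+o(n),
\]
which one gets from a direct computation with the exponential spacings: the number of islands concentrates on $Ke^{-c(1-1/\bar L)}$, the covered length on $n(1-e^{-c})$, and after substituting $e^{-c}=e^{-c(1-1/\bar L)}e^{-c/\bar L}$ all the lower-order terms cancel to leave exactly $1-e^{-c(1-1/\bar L)}$. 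Together with concentration of $\sum_s(M_s-\log n)$ and a union bound over the $2^{nR}$ wrong codewords, this gives vanishing error whenever $R<1-e^{-c(1-1/\bar L)}$; sending $\delta\to0$ closes the gap.

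For the converse I would run the dual (packing) argument: for any codebook of rate $R$, show that a typical output $\Y$ is "explainable'' by exponentially many codewords. Using the same island decomposition and spacing computation, one shows that, averaged over the transmitted message and the sampling, the number of $W'$ for which every read of $\Y$ is a substring of $x^n(W')$ is at least $2^{\,nR-n(1-e^{-c(1-1/\bar L)})-o(n)}$. When $R>1-e^{-c(1-1/\bar L)}$ this list is super-polynomially large, so no decoder can pin down $W$; a Fano-type inequality applied to the induced list then forces $R\le 1-e^{-c(1-1/\bar L)}$. The case $\bar L\le 1$ falls out because the exponent $c(1-1/\bar L)$ is non-positive: even a single read is, with high probability, a substring of essentially all codewords, so no positive rate survives.

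The step I expect to be the main obstacle is the converse — specifically, establishing the exponential confusability for \emph{every} code rather than just a random one. One must rule out a cleverly structured codebook that makes the ``$\Y$ fits inside $x^n(W')$'' event atypically rare for all pairs $W\ne W'$; I would attack this with a counting/pigeonhole argument on the substring content of the codewords (each length-$n$ codeword has only $n$ length-$m$ windows, so a rate-$R$ code with large $R$ cannot be "substring-disjoint'' in the relevant sense), but matching the constants to the achievability exponent exactly is delicate. A secondary technical point on the achievability side is the mismatch between the $\log n$ overlap threshold that makes the spacing computation give the right constant and the $\approx 2\log n$ threshold needed for unambiguous greedy merging — this is exactly why the decoder must be joint rather than assembly-based, and it requires care in bounding how an island can fail to be contained "as a whole''.
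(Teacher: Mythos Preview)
Your achievability sketch is workable and converges on the right exponent, but by a different route than the paper. The paper does not use your $\ell$-island decomposition at threshold $(1+\delta)\log n$ for achievability; it works with the \emph{true} islands (overlap threshold $0$), has the decoder run a brute-force ``Partition and Merge'' enumeration over all ways to assign overlap sizes and orderings to the reads, and bounds the number of resulting candidate island sets. The error exponent then comes out as $(1-e^{-c}) - (c/\bar L)e^{-c} - [\text{merging cost}]$, with the merging cost computed separately to be $e^{-c(1-1/\bar L)}-(1+c/\bar L)e^{-c}$, and the three pieces combine to $1-e^{-c(1-1/\bar L)}$. One slip in your version: for $\ell$-islands with $\ell\approx\log n$, the total length $\sum_s M_s$ is \emph{not} the coverage $n(1-e^{-c})$, because consecutive $\ell$-islands can still overlap by up to $\ell-1$; the correct value is $n(1-e^{-c(1-1/\bar L)})+(c/\bar L)\,n\,e^{-c(1-1/\bar L)}$, which after subtracting $K''\log n$ with $K''\approx Ke^{-c(1-1/\bar L)}$ does give your claimed identity.

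The genuine gap is your converse. The list/packing argument you sketch works for a random codebook but, as you yourself flag, does not obviously extend to an arbitrary one, and your pigeonhole idea does not pin down the constant. The paper avoids this difficulty entirely: it applies Fano to reduce to bounding $H(\Y)/n$, introduces a \emph{constrained genie} that merges only reads whose true overlap is at least $\delta\log n$ (producing a multiset $\Y'_\delta$ of ``apparent islands''), shows $H(\Y\mid\Y'_\delta)=o(n)$, and bounds $H(\Y'_\delta)/n$ via an entropy bound for unordered multisets of variable-length strings. Optimizing over $\delta$ yields $\delta=1$ --- precisely your $\ell\approx\log n$ threshold --- and the bound becomes exactly $1-e^{-c(1-1/\bar L)}$. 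So the $\ell$-island decomposition you propose is indeed the right object, but the way to make the converse hold for \emph{every} code (and to see why $\delta=1$ is optimal) is through this entropy/genie route rather than a confusability count.
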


In order to prove Theorem~\ref{Thm:Main}, 
we consider a random coding argument and develop a careful decoding algorithm that allows for a tight analysis of the error probability.
For the converse we use a novel constrained-genie argument, which specifically tackles  property (i).

Notice that the 
capacity of the SSC given in Theorem~\ref{Thm:Main} is zero when $\Bar{L} \leq 1$. 
An intuitive reason for this is that
when $\bar L < 1$, the number of possible distinct length-$L$ sequences is just $2^{\bar L \log n} = n^{\bar L} = o(n/\log{n}) = o(K)$, and many reads must be identical. This can be used to show that the decoder cannot discern any meaningful information from $\Y$. Section~\ref{Section:Converse} discusses this further. When $\bar{L} = 1$, this same intuition doesn't hold true, but as a consequence of the continuity of $C$, we expect $C = 0$, when $\bar L = 1$. This is indeed true as seen in Section~\ref{Section:Converse}.

In order to interpret the capacity expression in  (\ref{eq:SSCCapacity}) notice that the probability that a given symbol in $X^n$ is not sequenced by any of the $K$ reads is
\al{ \label{eq:expected_coverage}
\left( 1 - \frac{L}{n} \right)^K = \left( 1 - \frac{L}{n} \right)^{\frac{cn}{L}} \to e^{-c},
}
as $n \to \infty$.
Hence the expected fraction of symbols in $X^n$ that are covered by at least one read is asymptotically close to $1-e^{-c}$.
If instead of reads of length $L = \bar L \log{n}$ we had reads of length $(\bar L -1) \log n$, the new  coverage depth would be 
\aln{
c' = \frac{K (\bar L -1) \log n}{n} = c (1 - 1/\bar{L}),
}
and the expected fraction of symbols in $X^n$ that would be sequenced would be $1-e^{-c'} = 1-e^{-c(1-1/\bar L)}$.
Hence,
the capacity expression in Theorem~\ref{Thm:Main} suggests that, on average, $\log n$ bits from each read are used for ordering information, while the remaining $(\bar L - 1) \log n$ bits provide new data information.

It is also interesting to compare the capacity of the SSC and the capacity of the shuffling-sampling channel $C_{\rm shuf} = (1-e^{-c})(1-1/\bar L)$.
%
%
%
Note from Figure~\ref{fig:CapacityComparisons}, that $C_{\rm shuf}$ is strictly upper bounded by \eqref{eq:SSCCapacity}. This shows that given a coverage depth $c$, 
there are significant gains in terms of capacity to be obtained if we store data on a long DNA molecules instead of many short DNA molecules.
Moreover if we let the coverage depth $c \to \infty$; i.e. allow for a large number of samples, when $\Bar{L} > 1$,
$C_{\rm shuf} \to 1- 1/\bar L$, while $C \to 1$.
In particular, when reads are very short and $\bar L \approx 1$, 
$C_{\text{shuf}} \approx 0$, while the capacity of the SSC can be close to $1$ for large enough $c$.

%

It is also interesting to compare Theorem~\ref{Thm:Main} to the uncoded setting of genome sequencing studied in \cite{MotahariDNA}.
As discussed in Section~\ref{Section:Introduction} and illustrated in Figure~\ref{fig:regions}, the results in \cite{MotahariDNA} show that the perfect reconstruction  reconstruction (with error asymptotically going to $0$) of a random (uncoded) string $X^n$, can be done as long as $\Bar{L} > 2$ and 
$K = \Theta(n)$.  
In contrast, for the coding setting of the SSC, as long as $\Bar{L} > 1$ and $K = \Theta\left(\frac{n}{\log{n}}\right)$, we can obtain a positive capacity. Moreover as discussed in the introduction, since as $c \to \infty$, $C = 1$ we can claim that if $K = \Omega(n/\log{n})$, then for $\Bar{L} > 1$, $C = 1$.
This means that coding allows us to considerably reduce the threshold on sampled read size (by a factor of half) and the number of samples (by a factor of nearly $1/\log{n}$), while still admitting asymptotically perfect reconstruction.

The remainder of the paper is organized as follows.
In Section~\ref{Section:AchievableRates} we prove the achievability of Theorem~\ref{Thm:Main} and in Section~\ref{Section:Converse} we prove the converse.
We conclude the paper in Section~\ref{Section:Conclusion}.

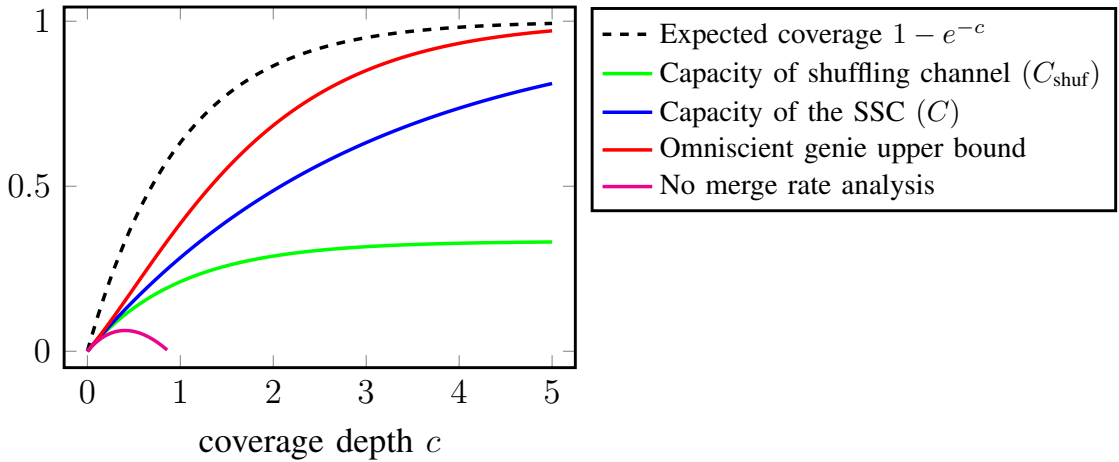
\begin{figure}[ht]
\centering 
\begin{tikzpicture}[scale=1.1]

\begin{axis}[enlargelimits=0.05,xlabel= coverage depth $c$,line width=1pt,grid=none,
legend cell align={left},legend pos=outer north east, legend style={font=\small},
width=0.47\textwidth,
height=0.36\textwidth,
]
\addplot[dashed ,very thick, name path=f,color=black,mark=none,draw=black!] table[x index=0,y index=1]{./DataFilesForGraphs/Coverage.dat};   
\addlegendentry{Expected coverage $1-e^{-c}$}

\addplot[very thick, name path=g,color=green,mark=none,draw=green!] table[x index=0,y index=1]{./DataFilesForGraphs/C_shuf.dat};   
\addlegendentry{Capacity of shuffling channel $(C_{\text{shuf}})$}

\addplot[very thick, name path=g,color=blue,mark=none,draw=blue!] table[x index=0,y index=1]{./DataFilesForGraphs/C.dat};   
\addlegendentry{Capacity of the SSC $(C)$}

\addplot[very thick, name path=g,color=red,mark=none,draw=red!] table[x index=0,y index=1]{./DataFilesForGraphs/C_full.dat};   
\addlegendentry{Omniscient genie upper bound}

\addplot[very thick, name path=g,color=magenta,mark=none,draw=magenta!] table[x index=0,y index=1]{./DataFilesForGraphs/C_no.dat};
\addlegendentry{No merge rate analysis}

\end{axis}
\end{tikzpicture}  
\vspace{2mm}
\caption{\label{fig:CapacityComparisons}
Comparison between the capacity of the SSC $C = 1 - e^{-c\left(1-\frac{1}{\bar{L}}\right)}$ with $\Bar{L} = 1.5$, the capacity of the shuffling channel with fragments of deterministic length $L$, the maximum rates achieved on the SSC when we allow a genie to merge the reads and the maximum rate discerned if reads aren't merged.}
\vspace{-2mm}
\end{figure} 

\section{Achievability}
\label{Section:AchievableRates}

We use a random coding argument to prove the achievability of Theorem~\ref{Thm:Main}. We generate a codebook 
with $2^{nR}$ codewords of length $n$,
independently picking each letter $\text{Ber}(1/2)$. Let the codebook be $\mathcal{C} = \{\mathbf{x}_1,\mathbf{x}_2,\dots, \mathbf{x}_{2^{nR}}\}$. The encoder chooses the codeword corresponding to the message $W \in [1:2^{nR}]$, and sends $\mathbf{x}_{W}$ across the Shotgun Sequencing Channel. The output, $\Y$, is presented to the decoder. For the analysis and without loss of generality, we assume  $W = 1$.

The optimal decoder looks for a codeword that contains all the reads in $\Y$ as substrings. 
Analyzing the error probability of this optimal decoder, however, is hard. 
We therefore aim to develop a decoding rule that is simple enough to analyze.

\subsection{Analysis without exploiting overlaps}

The fact that, in general, there are overlaps between the reads is an important feature of the channel output, since they allow reads to be merged, and this should be taken into account while developing a decoding rule. To motivate this, let us first bound the error probability without exploiting the overlaps for merging reads.


We say that the $i$th bit of $X^n$ is \emph{covered} if there is a read with starting position in 
\aln{
\{i-L+1,i-L+2,\dots,i\},
}
where the indices wrap around $X^n$. 
We then define the \emph{coverage} as the random variable
\al{ \label{eq:coverage}
    \Phi = \frac{1}{n}\sum_{i=1}^{n}\mathbf{1}_{\{i\text{th bit is covered}\}}.
}

\begin{restatable}{lemma}{CoverageLemma}
\label{lem:coverage}(Coverage)
For any $\epsilon > 0$, the coverage $\Phi$ satisfies
\begin{align}\label{eqn:ConcBoundCoverage}
     \Pr\left(\left| \Phi - \left(1 - e^{- c}\right)\right| > \epsilon\left(1 - e^{- c}\right)\right) \to 0,
\end{align} 
as $n \to \infty$.
\end{restatable}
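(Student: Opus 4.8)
The plan is to prove the lemma in two moves: (i) show $E[\Phi] \to 1-e^{-c}$, and (ii) show $\Phi$ concentrates tightly around $E[\Phi]$; the triangle inequality then gives \eqref{eqn:ConcBoundCoverage}. For the mean, linearity of expectation together with the cyclic symmetry of the model gives $E[\Phi] = \Pr(\text{bit }1\text{ is covered}) = 1 - (1-L/n)^K$, since bit $1$ fails to be covered exactly when none of the $K$ i.i.d.\ uniform starting points lands in its length-$L$ window. Plugging in $K = cn/(\bar L \log n)$ and taking logarithms, $K\ln(1-L/n) = -c - O(L/n) \to -c$, so $(1-L/n)^K \to e^{-c}$ and hence $E[\Phi] \to 1-e^{-c}$.

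For the concentration step I would use the bounded-differences (McDiarmid) inequality. Although the reads are handed over as the sorted vector $T^K$, the coverage $\Phi$ depends only on the \emph{multiset} of starting points, so we may view $\Phi = f(T_1,\dots,T_K)$ as a permutation-symmetric function of $K$ independent uniform variables on $[1:n]$. The crucial point is that relocating a single starting point $T_i$ changes the set of covered bits by at most $L$ positions: it can \emph{un}-cover only bits that lay in the old window of $T_i$ and in no other read's window (at most $L$ of them), and it can \emph{newly} cover only bits in the new window of $T_i$ (again at most $L$), so the covered fraction moves by at most $L/n$. Thus $f$ has bounded differences $L/n$ in every coordinate, and McDiarmid gives, for every fixed $t>0$,
\[
\Pr\bigl(|\Phi - E\Phi| \ge t\bigr) \le 2\exp\!\left(-\frac{2t^2 n^2}{K L^2}\right).
\]
Since $KL^2/n^2 = cL/n = c\bar L (\log n)/n \to 0$, the right-hand side tends to $0$.

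Finally, I would combine the two steps: for $n$ large enough $|E[\Phi] - (1-e^{-c})| < \tfrac{\epsilon}{2}(1-e^{-c})$, so the event $\{|\Phi - (1-e^{-c})| > \epsilon(1-e^{-c})\}$ is contained in $\{|\Phi - E\Phi| > \tfrac{\epsilon}{2}(1-e^{-c})\}$, and the latter has vanishing probability by the displayed bound with $t = \tfrac{\epsilon}{2}(1-e^{-c})$, which is a positive constant. The main obstacle, and the only place requiring care, is nailing down the bounded-difference constant — specifically, arguing rigorously that a single relocated read can alter at most $L$ bits of coverage no matter where the other $K-1$ reads sit; everything else is routine. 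An alternative to McDiarmid, if one prefers an elementary second-moment argument, is to bound $\Var(\Phi)$ directly: bits whose length-$L$ windows are disjoint have non-positive covariance (a short computation shows $\mathrm{Cov} = (1-2L/n)^K - (1-L/n)^{2K} \le 0$), while only $O(nL)$ ordered pairs of bits have overlapping windows and each contributes $O(1)$ to $\sum_{i\ne j}\mathrm{Cov}(\mathbf 1_i,\mathbf 1_j)$, yielding $\Var(\Phi) = O(L/n) = O((\log n)/n) \to 0$ and then Chebyshev's inequality.
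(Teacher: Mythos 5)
Your proposal is correct. The mean computation is identical to the paper's. For concentration, however, your primary route — viewing $\Phi$ as a permutation-symmetric function of the $K$ i.i.d.\ uniform starting points and applying McDiarmid with bounded-difference constant $L/n$ per coordinate — is genuinely different from the paper's argument, which uses Chebyshev after a covariance computation (splitting pairs of bits according to whether their length-$L$ windows are disjoint, showing negative covariance for disjoint pairs, and bounding the $O(nL)$ overlapping pairs trivially to get $\Var(\Phi) = O((\log n)/n)$). The McDiarmid route is cleaner — the only thing to verify is the Lipschitz constant, which your set-difference reasoning handles correctly ($|S_{\mathrm{new}}| - |S_{\mathrm{old}}|$ lies in $[-L,L]$) — and it buys exponential rather than polynomial concentration, since $KL^2/n^2 = c\bar L(\log n)/n \to 0$. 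The Chebyshev route is more elementary and self-contained but requires the covariance bookkeeping; your sketch of it at the end, including the computation $\mathrm{Cov} = (1-2L/n)^K - (1-L/n)^{2K}\le 0$ for disjoint windows, is in fact a bit tidier than the paper's version of the same estimate. Either path establishes the lemma.
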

The proof of this lemma is presented in Appendix~\ref{CovLemma}.
Note that $\lim_{n \to \infty} E[\Phi] = 1-e^{-c}$ as described in (\ref{eq:expected_coverage}), and thus, Lemma~\ref{lem:coverage} guarantees that the coverage $\Phi$ is concentrated around its expected value. We use this fact to discern how many bits in a candidate codeword need to match the bits sampled in the reads.

The decoding rule we consider is as follows: The decoder looks for the codeword in the codebook that contains all reads as substrings of that codeword and that for an $\ep > 0$, the coverage of these reads $> (1-\ep)(1-e^{-c})$. It declares an error if more than one such codeword exists. We want to bound the probability of error $\mathcal{E}$ based on Lemma~\ref{lem:coverage}, 
we can define $B := (1-\epsilon)(1-e^{-c})$ and follow steps similar to \cite{TPCLP} to obtain
\begin{align}\label{eq:RateTestOne}
    \Pr(\mathcal{E}) &= \Pr(\mathcal{E}|W = 1) \leq \Pr(\mathcal{E}|W = 1, \Phi \geq B) + \Pr(\Phi < B) \nonumber \\
    &\stackrel{(a)}{\leq} 2^{nR} \times n^{K}\times\frac{1}{2^{nB}} + o(1)
    = 2^{n(R - (K \log n)/n - B)} + o(1),
\end{align}
where $(a)$ holds because there are at most $n^K$ ways to arrange the $K$ reads on a codeword and,
given an arrangement, at least $n B$ bits of an incorrect codeword would need to match our reads to create an error.
Since $(K \log n)/n = c/\bar{L}$, in order for $\Pr(\mathcal{E}) \to 0$ as $n \to \infty$, we would need 
\begin{align}
    R \leq 1 - e^{-c} - \frac{c}{\Bar{L}}.
\end{align}
The achievable rate obtained from this analysis is plotted in Figure~\ref{fig:CapacityComparisons} 
in magenta. This rate is suboptimal and in fact (above a critical value of $c$) reduces as the coverage depth increases. 
This arises because when we bound the number of ways to arrange the reads on a length-$n$ codeword by $n^K$, it does not take into account overlaps between the reads.
To be able to discern higher rates, we need to develop a way to utilize the fact that, in general, many of the reads overlap with each other.



\subsection{Using overlaps to merge reads}

The analysis above indicates the need to merge the reads before we compare them to candidate codewords. 
Unfortunately, merging reads is not a straightforward process because reads $\vec Y_i$ and $\vec Y_j$ may have an overlap even if they do not correspond to overlapping segments of $X^n$.
In general, the merging process will be prone to errors and we need to develop a decoding algorithm that 
considers merges in a careful way.

In Section~\ref{Section:ProblemFormulation}, we defined the unknown vector $T^K$ to be the ordered starting positions of the reads in $\Y$. Thus, without loss of generality we assume that $\vec{Y}_i$ starts at $T_i$. We define the \emph{successor} of $\vec Y_{i}$ as $\vec Y_{i+1}$. We assume $Y_1$ is the successor of $Y_K$.
Now we need a consistent definition to characterize how large the overlap of a given read is.


\begin{defn} \label{def:overlap}
(Overlap size) The overlap size of a read is defined as the number of bits the suffix of the read shares with its successor. It has an overlap size of $0$ if no bits are shared (i.e., if a read and its successor have no overlap).
\end{defn}

The above definition implies that the overlap size of $\vec Y_i$ is $(L-(T_{i+1} - T_i))^+$. Notice that some reads might share some of their prefix bits with a predecessor read, but we do not consider this a contribution to the overlap size of that read. Intuitively speaking, since each bit of the string $X^n$ was generated independently as a $\text{Ber}(1/2)$ random variable, we would expect larger overlap sizes to be easily discerned as compared to smaller ones. Therefore, we would need to know: (a) how many pieces exist of particular overlap sizes, and (b) given an overlap size, how ``easy" it is to merge a read with its successor.

To handle (a), we define $G(\gamma)$ as a random variable that counts the number of reads with an overlap size of $\gamma\log{n}$, where $\gamma \in \Gamma := \left\{\frac{1}{\log{n}},\frac{2}{\log{n}},\dots,\Bar{L}\right\}$. 
Thus, $\gamma$ is chosen from a finite set that depends on $n$. 
We can say
\begin{align*}
    G(\gamma) := \sum_{i=1}^KG(\gamma)_i,
\end{align*}
where $G(\gamma)_i = \mathbf{1}_{\{\vec{Y}_i\text{ has an overlap size of }\gamma\log{n}\}}$.

To capture (b), given a binary string $\vec{z}$, we define the random variable $M_{\vec{z}}$ as the number of times $\vec{z}$ appears as the prefix of a read in $\Y$. Note that the length of $\vec{z}$ is in $ [1:\Bar{L}\log{n}]$. Let $\mathcal{Z}$ be the set of all binary strings with lengths in $[1:\Bar{L}\log{n}]$.

If we can identify the merges correctly, we are left with a set of variable-length strings called islands, which we formally define next. 
\begin{defn}
(Islands) The set of non-overlapping substrings that are obtained after merging all the reads to their successors based on positive overlap sizes are called islands.
\end{defn}

Let $K^\prime$ be the number of islands. Then, we have the following result.
\begin{restatable}{lemma}{IslandNumbers}\label{lemma:IslandNumbers}
(Number of islands) For any $\epsilon > 0$, the number of islands $K'$ satisfies
\begin{align}\label{eqn:IslandNumbers}
    \Pr\left(\middle|K' - Ke^{-c}\middle|\geq \epsilon Ke^{- c}\right)
    \to 0,
\end{align}
as $n \to \infty$.
\end{restatable}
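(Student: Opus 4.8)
The key observation is that islands are in bijection with the "last read of each island," i.e., those reads whose overlap size with their successor is $0$. So $K'$ equals the number of reads $\vec Y_i$ for which $T_{i+1} - T_i \geq L$ (with wraparound), which is the number of indices $i$ such that the gap to the next starting point is at least $L$. My plan is to write $K' = \sum_{i=1}^K \mathbf{1}_{\{T_{i+1}-T_i \geq L\}}$ and compute its mean and variance, then apply Chebyshev.

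First I would compute $E[K']$. Since the $K$ starting points are i.i.d. uniform on $[1:n]$ (sorted into $T^K$), the probability that a fixed read has no successor overlap is the probability that none of the other $K-1$ starting points falls in the $L-1$ positions immediately following it, which is $\left(1-\frac{L-1}{n}\right)^{K-1} \to e^{-c}$ as $n\to\infty$ (using $L = \bar L \log n$ and $K = cn/L$, exactly as in~\eqref{eq:expected_coverage}). By linearity, $E[K'] = K\left(1-\frac{L-1}{n}\right)^{K-1} = Ke^{-c}(1+o(1))$. For the variance, I would bound $\Var(K')$ by controlling the pairwise correlations $\mathrm{Cov}(\mathbf{1}_{\{i\text{ ends island}\}}, \mathbf{1}_{\{j\text{ ends island}\}})$: these events depend on disjoint windows of length $L$ when the reads are far apart and are only weakly dependent through the shared pool of $K$ points, so one expects $\Var(K') = O(K)$, hence $\Var(K')/E[K']^2 = O(1/K) = o(1)$ since $K = \Theta(n/\log n) \to \infty$. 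Then Chebyshev gives $\Pr(|K' - E[K']| \geq \tfrac{\epsilon}{2} E[K']) \to 0$, and since $E[K'] = Ke^{-c}(1+o(1))$, absorbing the $o(1)$ into the $\epsilon$ slack yields~\eqref{eqn:IslandNumbers}.

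The cleanest way to handle the correlation structure is a Poissonization / balls-in-bins argument: model the $K$ starting points as $K$ balls thrown into $n$ bins, and note that "bin $i$ contains a starting point whose next $L-1$ bins are empty" depends only on the occupancy of bins $i, i+1, \dots, i+L-1$. Two such indicator events for $i$ and $j$ are independent whenever their windows $\{i,\dots,i+L-1\}$ and $\{j,\dots,j+L-1\}$ are disjoint, modulo the global constraint that the total number of balls is exactly $K$ (which a Poissonization step removes, at a negligible cost). Since each index has only $O(L) = O(\log n)$ indices whose windows overlap its own, the number of correlated pairs is $O(KL)$, and each covariance term is $O((L/n))$ times a bounded quantity; a careful accounting gives $\Var(K') = O(K)$, possibly with a benign $\log n$ factor that is still $o(E[K']^2) = o(K^2)$.

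The main obstacle is making the variance bound rigorous despite the dependence induced by (i) sorting the i.i.d. starting points and (ii) the hard constraint of exactly $K$ reads, together with the subtlety that two reads may have \emph{identical} starting points (a collision), which must be counted consistently with Definition~\ref{def:overlap}. I would address this by either (a) passing to a Poissonized model where the number of reads is $\mathrm{Poisson}(K)$ and the bin occupancies are independent, proving concentration there, and then de-Poissonizing; or (b) directly applying a bounded-differences / Efron--Stein inequality to the function $K'(T_1,\dots,T_K)$ of the i.i.d. unsorted starting points, since moving one starting point changes $K'$ by $O(1)$, which immediately yields $\Var(K') = O(K)$ and even a sub-Gaussian tail. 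Option (b) is probably the shortest route and sidesteps the sorting and collision bookkeeping entirely.
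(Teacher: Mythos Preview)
Your proposal is correct and follows the same overall architecture as the paper: write $K'$ as a sum of $K$ indicator variables (one for each read that ends an island, i.e., has overlap size zero), compute $E[K'] \sim Ke^{-c}$, bound the variance by $O(K)$, and apply Chebyshev together with the triangle inequality to pass from $E[K']$ to $Ke^{-c}$. The paper's computation of $E[K']$ is essentially identical to yours (it uses $(1-L/n)^{K-1}$ rather than your $(1-(L-1)/n)^{K-1}$, a harmless discrepancy in how ties are counted).

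Where you differ is in the variance bound. The paper carries out a direct second-moment calculation: for $i \neq j$ it conditions on whether or not $\vec Y_i$ and $\vec Y_j$ overlap each other, argues that conditioned on non-overlap the two ``has no successor overlap'' events are essentially independent (each with probability at most $(1-L/n)^{K-2}$), adds back the $O(\log n / n)$ probability that they do overlap, and concludes that the off-diagonal covariances are nonpositive for large $n$, whence $\Var(K') \leq K$. Your option~(b), Efron--Stein/bounded differences applied to the unsorted i.i.d.\ starting points, is a genuinely different and cleaner route: relocating a single starting point merges two adjacent gaps into one at the old location and splits one gap into two at the new location, so $K'$ changes by at most $2$, giving $\Var(K') \leq 2K$ with no covariance bookkeeping and no need to worry separately about collisions or sorting. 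The paper's approach has the virtue of being elementary and fully explicit; yours is shorter, more robust, and would in fact yield a sub-Gaussian tail via McDiarmid, which is stronger than what is needed here.
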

The proof of this lemma is available in Appendix~\ref{IslandNumberLemma}.
%
Similar to the previous lemma, Lemma~\ref{lemma:IslandNumbers} guarantees that the number of islands $K^\prime$ is concentrated around its expectation $Ke^{-c}$.
Lemmas~\ref{lem:coverage} and \ref{lemma:IslandNumbers} are used in the later part of the decoding to look at different arrangements of the non-overlapping islands and the bits that match these arrangements. However, to use these results, the decoder would first need to obtain the non-overlapping islands (the decoder only has the reads currently). The following lemmas give the decoder some guidelines on how to construct these islands from the reads.
\begin{restatable}{lemma}{MergeCandidates}\label{lemma:MergeCandidates}
(Number of potential overlaps) For any $\epsilon > 0$,
\begin{align}
    &\Pr\left(\bigcup_{\vec z 
    \in \mathcal{Z}:\gamma(\vec{z})\leq 1-\ep}\left\{\left|M_{\vec{z}}-K n^{-\gamma(\vec z)}\right|\geq  \ep K n^{-\gamma(\vec z)}\right\}\right) \to 0 \text{ and }\nonumber \\
    &\Pr\left(\bigcup_{\vec z 
    \in \mathcal{Z}:\gamma(\vec{z})> 1-\ep}\left\{M_{\vec{z}} \geq n^{\ep}\right\}\right) \to 0,
\end{align}
as $n \to \infty$, where we define $\gamma(\vec{z}) := |\vec z|/\log{n}$. 
\end{restatable}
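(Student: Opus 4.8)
The plan is to disentangle the two layers of randomness in the channel. Conditioned on the codeword $X^n$, write $N_{\vec z}$ for the number of positions $t\in[1:n]$ at which $\vec z$ occurs as a length-$|\vec z|$ circular substring of $X^n$. Since the $K$ read start points $T_1,\dots,T_K$ are i.i.d.\ uniform on $[1:n]$, a read carries $\vec z$ as a prefix precisely when its start lands in one of those $N_{\vec z}$ positions, so conditionally on $X^n$ we have $M_{\vec z}\sim\mathrm{Bin}(K,N_{\vec z}/n)$. As $\log$ is base $2$, a fixed window equals $\vec z$ with probability $2^{-|\vec z|}=n^{-\gamma(\vec z)}$, hence $E[N_{\vec z}]=n^{1-\gamma(\vec z)}$ and $E[M_{\vec z}]=Kn^{-\gamma(\vec z)}$, which is exactly the centering appearing in the lemma. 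So I would (i) show $N_{\vec z}$ concentrates around $n^{1-\gamma(\vec z)}$ uniformly in $\vec z$, (ii) condition on a typical value of $N_{\vec z}$ and apply a Chernoff bound to the binomial $M_{\vec z}$, and (iii) take a union bound over the relevant $\vec z$. For (i) the main observation is that the occurrence indicators $\mathbf{1}\{(X_t,\dots,X_{t+|\vec z|-1})=\vec z\}$ are $|\vec z|$-locally dependent: grouping $[1:n]$ by residue modulo $|\vec z|$, the windows within one class are disjoint and hence independent, so $N_{\vec z}$ is a sum of $|\vec z|\le\bar L\log n$ independent binomials and obeys a Chernoff bound with failure probability $\exp(-\Omega(E[N_{\vec z}]/\log n))$.

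For the first claim, where $\gamma(\vec z)\le 1-\ep$, the relevant means are polynomially large: $E[M_{\vec z}]=Kn^{-\gamma(\vec z)}\ge Kn^{\ep-1}=\frac{c}{\bar L}\cdot\frac{n^{\ep}}{\log n}$ and $E[N_{\vec z}]\ge n^{\ep}$. I would fix a small $\delta=\delta(\ep)$, use step (i) to get $N_{\vec z}$ within a factor $1\pm\delta$ of $n^{1-\gamma(\vec z)}$ except with probability $\exp(-\Omega(n^{\ep}/\log n))$, and then on that event apply a multiplicative Chernoff bound to $\mathrm{Bin}(K,N_{\vec z}/n)$ to pin $M_{\vec z}$ within a factor $1\pm\delta$ of $KN_{\vec z}/n$, again with failure probability $\exp(-\Omega(n^{\ep}/\log n))$. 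Since there are at most $\sum_{\ell\le(1-\ep)\log n}2^{\ell}\le 2n^{1-\ep}$ strings with $\gamma(\vec z)\le 1-\ep$, a union bound multiplies a super-polynomially small quantity by a polynomial one and still vanishes; a triangle inequality and a choice of $\delta$ small relative to $\ep$ then give the first display.

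For the second claim, where $\gamma(\vec z)>1-\ep$, I would first reduce to the single length $\ell_0:=\lfloor(1-\ep)\log n\rfloor+1$, since a length-$\ell_0$ prefix of any longer $\vec z$ occurs in at least as many reads. Now $\nu:=E[N_{\vec z}]=n\,2^{-\ell_0}\le n^{\ep}$, and step (i) makes the event $\{N_{\vec z}\le 2\nu\ \text{for all }|\vec z|=\ell_0\}$ hold with probability $1-o(1)$. On it, $N_{\vec z}/n\le 2n^{\ep-1}$, so $\frac{eK}{n^{\ep}}\cdot\frac{N_{\vec z}}{n}\le\frac{2ec}{\bar L\log n}=:q_n\to0$, and the binomial upper tail gives
\[
\Pr\!\left(M_{\vec z}\ge n^{\ep}\mid X^n\right)\le\binom{K}{n^{\ep}}\left(\frac{N_{\vec z}}{n}\right)^{n^{\ep}}\le q_n^{\,n^{\ep}} .
\]
Union bounding over the $2^{\ell_0}\le 2n^{1-\ep}$ strings of length $\ell_0$ then bounds the probability in question by $o(1)+2n^{1-\ep}q_n^{\,n^{\ep}}\to0$.

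The hard part is precisely this last union bound: there are up to $2^{L}=n^{\bar L}$ candidate prefixes but only $K=\Theta(n/\log n)$ reads, so a per-string estimate that is merely inverse-polynomial cannot work. What saves the argument is that, on the typical event for $N_{\vec z}$, the conditional mean $E[M_{\vec z}\mid X^n]$ sits below the threshold $n^{\ep}$ by a $\Theta(1/\log n)$ factor, so the binomial upper tail at level $n^{\ep}$ carries the tiny base $q_n=\Theta(1/\log n)$ to the power $n^{\ep}$, which easily beats the exponentially many strings; the remaining ingredient, uniform control of the locally dependent sums $N_{\vec z}$ over exponentially many $\vec z$, comes for free from the residue-class decomposition. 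I would also need to dispatch the $\bar L\le 1$ boundary cases, but there the second union is vacuous (no $\vec z$ has $\gamma(\vec z)>1-\ep$ once $\bar L\le 1-\ep$) and the first is handled exactly as above.
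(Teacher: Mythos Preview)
Your proposal is correct and its architecture matches the paper's: both decompose the randomness into the codeword layer (controlling $N_{\vec z}$, the number of occurrences of $\vec z$ in $X^n$) and the sampling layer (conditionally $M_{\vec z}\sim\mathrm{Bin}(K,N_{\vec z}/n)$), use the residue-class decomposition to turn $N_{\vec z}$ into a sum of $|\vec z|\le\bar L\log n$ independent binomials, apply Chernoff/Hoeffding at each layer, and finish with a union bound over $\vec z$. For the first display the two arguments are essentially identical.

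For the second display you take a genuinely cleaner route. The paper treats every length $\gamma(\vec z)>1-\ep$ directly: it first bounds $\Pr(N_{\vec z}\ge Ln^\ep/c)$ via the residue decomposition and Hoeffding, then conditions and applies Hoeffding again to show $M_{\vec z}<n^\ep$, and finally union-bounds over all $O(n^{\bar L})$ long strings. Your prefix-monotonicity reduction $\{M_{\vec z}\ge n^\ep\}\subseteq\{M_{\vec z'}\ge n^\ep\}$ with $|\vec z'|=\ell_0=\lfloor(1-\ep)\log n\rfloor+1$ collapses the union to a single length with only $2^{\ell_0}\le 2n^{1-\ep}$ strings, at which $E[N_{\vec z'}]=\Theta(n^\ep)$ sits squarely in the regime where both the residue Chernoff and the binomial upper-tail bound $\binom{K}{m}p^m\le(eKp/m)^m$ are sharp. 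This buys you a shorter argument and avoids the awkward case analysis over $\gamma$; the paper's approach, on the other hand, needs the per-string failure probability to beat $n^{-\bar L}$ uniformly in $\gamma$, which it does but with a messier exponent.
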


Lemma~\ref{lemma:MergeCandidates} considers two separate cases for binary strings based on their length.
For strings $\vec z$ with length at most $(1-\ep)\log{n}$, Lemma~\ref{lemma:MergeCandidates} states that $M_{\vec z}$ is close to its mean 
\aln{
K n^{-\gamma(\vec z)} = \frac{c n^{1-\gamma(\vec z)}}{\bar L \log n}.
}
For strings $\vec z$ with length greater than $(1-\ep)\log{n}$, the same concentration result does not hold, and Lemma~\ref{lemma:MergeCandidates} simply states that $M_{\vec z} < n^{\ep}$ with high probability.

\begin{restatable}{lemma}{OverlapNumber}\label{lem:OverlapNumber}
(Number of reads of a given overlap size) For all $\epsilon > 0$, 
\begin{align}
    \Pr\left(\bigcup_{\gamma \in \Gamma} \left\{ |G(\gamma) - \bar{G}(\gamma)|\geq \epsilon \bar{G}(\gamma) \right\} \right) \to 0,
\end{align}
as $n \to \infty$, where $\bar{G}(\gamma) := E[G(\gamma)]$.
\end{restatable}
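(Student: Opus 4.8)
The plan is to show concentration of $G(\gamma)$ around its mean $\bar G(\gamma)$ simultaneously for all $\gamma \in \Gamma$, via a union bound over the (at most $\bar L \log n + 1$) values in $\Gamma$, so it suffices to establish a per-$\gamma$ concentration bound that decays faster than $1/\log n$. First I would compute $\bar G(\gamma) = E[G(\gamma)]$. Recall $G(\gamma) = \sum_{i=1}^K G(\gamma)_i$ where $G(\gamma)_i$ is the indicator that read $\vec Y_i$ has overlap size exactly $\gamma \log n$ with its successor, i.e.\ that $T_{i+1} - T_i = L - \gamma \log n$ (for $0 < \gamma < \bar L$), or that the gap is at least $L$ (for $\gamma = 0$, the "no overlap" case). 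Since the $K$ starting points are i.i.d.\ uniform on $[1:n]$ and then sorted, the gaps $T_{i+1}-T_i$ between consecutive order statistics (on the circle) are exchangeable, so $E[G(\gamma)_i]$ does not depend on $i$ and $\bar G(\gamma) = K \Pr(T_2 - T_1 = L - \gamma \log n)$ for a single consecutive gap. That single-gap distribution is standard: the probability that a given consecutive spacing equals a particular value $g$ behaves, for $g = \Theta(\log n)$ and $K = \Theta(n/\log n)$, like $\tfrac{K}{n}(1 - \tfrac{K}{n})^{g-1} \approx \tfrac{K}{n} e^{-Kg/n}$, giving $\bar G(\gamma) = \Theta(n/\log n)$ for each fixed $\gamma$, which grows polynomially in $n$.

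The main obstacle is that the summands $G(\gamma)_i$ are not independent — they are determined by the consecutive spacings of the order statistics, which are negatively associated but not independent — so I cannot apply a vanilla Chernoff bound directly. My plan for handling this is one of two routes. Route one: use the Poissonization trick, replacing the fixed $K$ reads by a Poisson$(K)$ number of reads whose starting points then fall in disjoint arcs independently; under Poissonization the counts of gaps of each size become sums of independent contributions (or can be analyzed via a Poisson process on the circle with the spacings being i.i.d.\ exponential/geometric), and one transfers the concentration back to the fixed-$K$ model paying only an $O(1/\sqrt K)$ correction. Route two: bound the variance directly, $\Var(G(\gamma)) = \sum_i \Var(G(\gamma)_i) + \sum_{i \neq j}\mathrm{Cov}(G(\gamma)_i, G(\gamma)_j)$, using the fact that spacings of uniform order statistics are negatively correlated so the covariance terms are non-positive (or at worst contribute lower-order), giving $\Var(G(\gamma)) = O(\bar G(\gamma))$, and then apply Chebyshev: $\Pr(|G(\gamma) - \bar G(\gamma)| \ge \ep \bar G(\gamma)) \le \Var(G(\gamma))/(\ep^2 \bar G(\gamma)^2) = O(1/(\ep^2 \bar G(\gamma))) = O(\log n / n)$.

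Either way, the per-$\gamma$ failure probability is $O(\log n / n)$ (or exponentially small via Chernoff/Poissonization), and the union bound over $|\Gamma| = O(\log n)$ values yields an overall bound of $O((\log n)^2/n) \to 0$. I would present the argument as: (1) establish $\bar G(\gamma) = \Theta(n/\log n)$ with the explicit geometric-type formula for the consecutive-gap probability; (2) invoke negative association of uniform spacings (a known fact, citable or provable in a line) to control the variance; (3) apply Chebyshev per $\gamma$; (4) union-bound over $\Gamma$. The one subtlety to flag is the boundary case $\gamma = \bar L$ (overlap size exactly $L$, meaning $T_{i+1}=T_i$, i.e.\ a coincident starting point) and $\gamma$ values near $\bar L$ where $\bar G(\gamma)$ may be small — but these are degenerate/low-probability and can be folded in, since even there the concentration statement $|G(\gamma) - \bar G(\gamma)| < \ep \bar G(\gamma)$ holds trivially or via the same variance bound. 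The hard part, to reiterate, is the dependence among the $G(\gamma)_i$; once negative association (or Poissonization) is in hand, the rest is routine second-moment bookkeeping.
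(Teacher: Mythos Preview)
Your overall strategy --- Chebyshev for each fixed $\gamma$, then a union bound over the $O(\log n)$ values in $\Gamma$ --- is exactly the route the paper takes, and your estimate that the per-$\gamma$ failure probability is polynomially small in $n$ is correct. Two points deserve correction, one minor and one substantive.

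Minor: with $K = \Theta(n/\log n)$ and the single-gap probability $\Pr(U_i = g) \sim K/n$, you get $\bar G(\gamma) = K \cdot \Theta(K/n) = \Theta(K^2/n) = \Theta(n/\log^2 n)$, not $\Theta(n/\log n)$. This does not affect the argument, since it is still polynomial.

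Substantive: your Route 2 invokes negative association of the spacings $(U_1,\ldots,U_K)$ to conclude that $\mathrm{Cov}(G(\gamma)_i,G(\gamma)_j)\le 0$. Negative association controls covariances of \emph{monotone} coordinatewise functions, but for $\gamma>0$ the indicator $G(\gamma)_i = \mathbf{1}\{U_i = (\bar L-\gamma)\log n\}$ is not monotone in $U_i$, so NA does not deliver the sign you want. (It \emph{does} work for $\gamma=0$, where $G(0)_i=\mathbf{1}\{U_i\ge L\}$ is increasing --- this is how the paper handles Lemma~\ref{lemma:IslandNumbers}.) Your hedge ``or at worst contribute lower-order'' is what is actually needed, but you have to supply an argument. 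The paper does this directly: for two distinct reads $\vec Y_i,\vec Y_j$, either their starting points are within $L$ of each other (probability $O(L/n)=O(\log n/n)$), or they are not, in which case the events ``$\vec Y_i$ has overlap size $\gamma\log n$'' and ``$\vec Y_j$ has overlap size $\gamma\log n$'' are determined by disjoint stretches of the remaining $K-2$ reads and are (at most) independent. This yields $E[G(\gamma)_iG(\gamma)_j]\le \bar G(\gamma)^2/K^2 + O(\log n/n)$, hence $\mathrm{Var}(G(\gamma))\le \bar G(\gamma) + O(K^2\log n/n)$ and $\mathrm{Var}(G(\gamma))/\bar G(\gamma)^2 = O(\mathrm{polylog}(n)/n)$, which survives the union bound. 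Your Poissonization Route 1 would also close the gap, but the direct conditioning is shorter.
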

Lemma~\ref{lem:OverlapNumber} give us a handle on the expected number of overlaps of each size, which will be used by the decoder when trying to construct the islands from the reads.
Lemmas~\ref{lemma:MergeCandidates} and \ref{lem:OverlapNumber} are proved in Appendices~\ref{MergeCandidatesLemma} and \ref{OverlapNumber}.

The decoding procedure starts with a brute-force search over ways to merge the reads into islands, which we refer to as the  Partition and Merge (PM) algorithm.
We will first explain it in words and follow it by outlining the exact algorithm. 
First, the decoder considers all possible partitions of the reads into $L$ groups, by assigning potential overlap sizes to each read. This can be done by looking at all ways of assigning a number in $[0:L]$ to each of the reads. To make this precise, we can look at all possible vectors of the form $\vec{p}:=(p_1,p_2,\dots,p_K) \in [0:L]^K$ and call them \emph{partition vectors}. Each element $p_i$ of the vector corresponds to an assigned overlap size of read $\vec Y_{\sigma(i)}$ for some permutation $\sigma$ of the elements of $\Y$. Thus, each partition vector along with a permutation $\sigma$ can be viewed as assigning an overlap size to each read. 
It is easy to see the total number of such partition vectors (and hence the total possible partitions) will be $P := (L+1)^K$.

Rather than considering all $P$ partitions, we will only consider partitions that satisfy the bounds implied by Lemmas~\ref{lem:coverage}--\ref{lem:OverlapNumber}. To make this requirement precise, we define for a partition vector $\vec{p}$, $G(\vec{p},\gamma)$ to be the number of reads in $\Y$ that would have an overlap size of $\gamma\log{n}$ according to partition vector $\vec p$, which can be written as 
\begin{align}
    G(\vec{p},\gamma) = |\{ i : p_i = \gamma \log n \}|.
\end{align}
Note that since the number of potential islands is exactly equal to the number of reads with overlap size zero, the total number of islands according to $\vec{p}$ is $G(\vec{p},0)$.
Moreover we define $\Phi(\vec{p})$ as the total coverage of the reads according to $\vec{p}$, which is given by
\begin{align}
    \Phi(\vec{p}) := KL - \sum_{i=1}^K p_i.
\end{align}
We then define the set $\P$ as the set of all $\vec{p}$  such that (for a fixed $\ep > 0$):
\begin{itemize}
    \item $|\Phi(\vec{p}) - (1-e^{-c})| \leq \ep (1-e^{-c})$ (i.e., coverage is close to expected coverage),
    \item $|G(\vec{p},0) - Ke^{-c}|\leq \ep Ke^{-c}$ (i.e., number of islands is close to expected number of islands),
    \item $|G(\vec{p},\gamma) - \bar G(\gamma)| \leq \ep \bar G(\gamma)$ for all $\gamma \in \Gamma$ (i.e., number of reads with overlap size $\gamma \log n$ is close to the expected number).
\end{itemize}
Therefore, $\P$ restricts the total number of partition vectors to a smaller set of partition vectors that are admissible according to Lemmas~\ref{lem:coverage},~\ref{lemma:IslandNumbers} and \ref{lem:OverlapNumber}.
 
Now, for each partition vector $\vec{p} \in \P$, we take all possible $K!$ permutations $\sigma$ of the reads. For each permutation, all of the reads are compared to their successors.
If every read can be successfully merged with its successor with the assigned overlap size, we retain the set of substrings formed after these merges as a Candidate Island set, and add it to the set ${\rm CI}$.
Notice that ${\rm CI}$ is a set of sets of variable-length strings.
This procedure is summarized in Figure~\ref{fig:DecodingRule} 
and Algorithm~\ref{alg:DecodingRule}.
After completion of Algorithm~\ref{alg:DecodingRule}, the decoder checks, for each set of candidate islands in ${\rm CI}$, whether there exists a codeword that contains all the candidate islands as substrings.
If only one such codeword is found, the decoder outputs its index.
Otherwise, an error is declared.


\SetAlgoNoLine%
  \begin{algorithm}
  \DontPrintSemicolon
    \For{each partition vector $\vec p \in \P$} {
      \For{each permutation $\sigma$ of $[1:K]$} {
        check if suffix of length $p_i$ of $\vec Y_{\sigma(i)}$ matches prefix of $\vec Y_{\sigma(i+1)}$, for $i=1,\dots,K$ \;        
        \If{prefix and suffix match for $i=1,\dots,K$} {
            Merge reads according to overlaps \;
            Add set of resulting islands to ${\rm CI}$ \; 
        } 
      }
    }
    \Return{${\rm CI}$}\;
    \caption{Partition and Merge\label{alg:DecodingRule}}
  \end{algorithm}

\begin{figure}
\vspace{-4mm}
\centering
\includegraphics[width=0.9\linewidth]{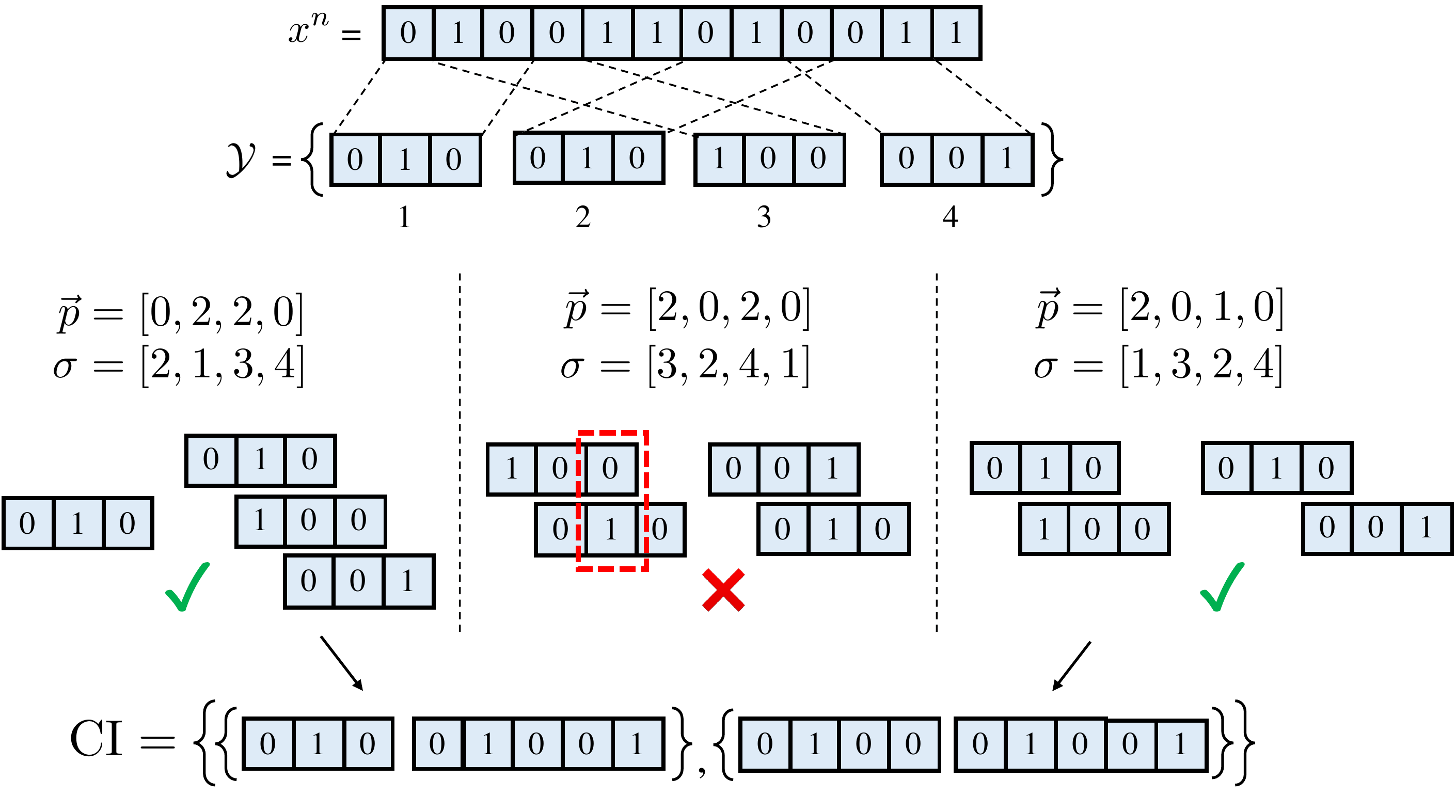}
\caption{The decoder receives the shotgun sequenced reads $\Y$ and performs the Partition and Merge procedure.
For each partition $\vec p \in [0:L]^K$ of the $K$ reads according to overlap size, and each ordering of the reads $\sigma$, the decoder attempts to merge the reads into islands based on $\sigma$ and $\vec p$.
The figure shows this procedure for three choices of $\vec p$ (out of $(1+L)^K$) and three choices of $\sigma$ (out of $K!$).
If the merging of all reads is successful for some $p$ and $\sigma$, the set of resulting islands is added to the set ${\rm CI}$.
}
\label{fig:DecodingRule}

\end{figure}




Let the event that the decoder makes an error be $\mathcal{E}$. An error occurs if more than one codeword contains any of the CI sets as substrings.
We define $B_1 := (1+\epsilon)Ke^{- c}$, $B_2 := (1-\epsilon)\left(1-e^{- c}\right)$, $B_{3}(\gamma) := (1+\epsilon)n^{1-\gamma}$ for $\gamma \leq 1 - \ep$, $B_{3}(\gamma) = n^{\ep}$ for $\gamma > 1-\ep$ and $B_{4}(\gamma) := (1+\epsilon)\bar{G}(\gamma)$,
and we define the corresponding undesired events as
\aln{
& \B_1 = \{K' > B_1\} \\
& \B_2 = \{\Phi < B_2\} \\
& \B_3 = \bigcup_{\vec{z} \in \mathcal{Z}} \{ M_{\vec{z}} > B_{3}(\gamma(\vec z)) \}\\
& \B_4 = \bigcup_{\gamma \in \Gamma} \{  G(\gamma) > B_{4}(\gamma) \}.
}
From Lemmas~\ref{lem:coverage},~\ref{lemma:IslandNumbers},~\ref{lemma:MergeCandidates} and~\ref{lem:OverlapNumber}, if we let 
\begin{align}\label{eq:Bdefn}
\B = \B_1 \cup \B_2 \cup \B_3 \cup \B_4,
\end{align}
we have $\Pr(\B) \to 0$.
Note that conditioned on $\bar{\B}$, we are guaranteed that the ${\rm CI}$ set outputs the true island set. This is because exactly one partition and one arrangement given that partition correspond to the true order in which the reads were sampled.
%
 Before we use this to bound the probability of error, notice that the error event depends on the total number of CI sets output by the PM algorithm. In general, this is not a deterministic value. 
 We will define $\overline{CI}_n$ as an upper bound on the number of CI sets conditioned on $\overline \B$. 
 We claim that
conditioned on $\overline \B$, after the PM algorithm, the resulting ${\rm CI}$ (which is a set of sets of binary strings) satisfies
\al{ \label{eq:cibound}
|{\rm CI}| \leq P \times \prod_{\gamma \leq 1 - \ep} B_{3}(\gamma)^{B_4(\gamma)}\times\prod_{\gamma > 1-\ep} n^{\ep B_4(\gamma)} := \overline{CI}_n.
}
To see this, first we notice that $|\P| \leq P$.
According to a given partition vector $\vec p \in \P$, there are at most $B_4(\gamma)$  reads with overlap size $\gamma\log{n}$.
Given a read $\vec{Y}_i$ with assigned overlap size $\gamma \log n$, when $\gamma \leq 1-\ep$,
there are at most $B_3(\gamma)$ reads whose prefix matches the $(\gamma \log n)$-suffix of $\vec{Y}_i$ and, therefore, at most $B_3(\gamma)$ potential valid merges.
Therefore, for a given overlap size $\gamma \log{n}$, $\gamma \leq 1$, there at most $B_{3}(\gamma)^{B_{4}(\gamma)}$ merge possibilities. 
However, when $\gamma > 1-\ep$, we know that for the given read, there at most $n^{\ep}$ potential valid merges. Therefore there are at most $n^{\ep B_4{(\gamma)}}$ merge possibilities. 
We thus bound the probability of error averaged over all codebooks as
\al{
    \Pr(\E) & = \Pr(\mathcal{E}|W=1) \leq \Pr(\mathcal{E}|W=1,\overline{\mathcal{B}}) + \Pr(\mathcal{B}) \nonumber \\
    &\stackrel{(a)}{\leq}  2^{nR} \times \overline{CI}_n \times n^{B_1} \times \frac{1}{2^{nB_2}} + o(1) \nonumber \\ 
    &=  2^{nR + \log \overline{CI}_n + B_1\log{n} - nB_2} + o(1) \nonumber \\
    &= 2^{nR + \log \overline{CI}_n + (1+\ep)Ke^{-c}\log{n} - n(1-\ep)(1-e^{-c})} + o(1).\label{eq:error_part1}
}

This follows because an error occurs if any of the $2^{nR}-1$ codewords ($W \ne 1$) contain any of the sets of candidate islands in ${\rm CI}$ (which is upper bounded by $\overline{CI}_n$ when conditioned on $\overline{\B}$).
Each of the sets in ${\rm CI}$ contains at most $B_1$ islands and a total island length of at least $nB_2$ bits.
Hence, there are at most $n^{B_1}$ ways to arrange the islands on a codeword and,
given one such arrangement, an
an erroneous codeword must match these islands at at least $n B_2$ bits.

Let us compare \eqref{eq:RateTestOne} and \eqref{eq:error_part1}. The term $n^K$ in \eqref{eq:RateTestOne}, which bounds the number of arrangements of reads on a candidate codeword, is replaced by $n^{B_1} \times \overline{CI}_n$ in \eqref{eq:error_part1}, which carefully takes into account both the cost of merging reads together and the number of arrangements after all the reads are merged together to form CIs. This improvement, as we see below, is crucial to achieve optimal rates.

In order to have $\Pr(\mathcal{E}) \to 0$ in \eqref{eq:error_part1}, we require
\begin{align*}
    R &\leq\lim_{n \to \infty}\left( (1-\epsilon)(1-e^{-c}) - (1+\epsilon)\frac{ce^{-c}}{\Bar{L}} - \frac1n\log{\overline{CI}_n}\right) \\
    &= (1-\epsilon)(1-e^{-c}) - (1+\epsilon)\frac{ce^{-c}}{\Bar{L}} - \lim_{n \to \infty} \frac1n\log{\overline{CI}_n}
\end{align*}
The following lemma, proved in Appendix~\ref{CostSumLem}, evaluates the last term in the above expression.
\begin{restatable}{lemma}{CostSumEvaluation}\label{lem:CostSumEvaluation}
(Cost of merging)
The upper bound $\overline{CI}_n$ on $|{\rm CI}|$ satisfies
\aln{
\lim_{n \to \infty} \frac{1}{n}\log \overline{CI}_n \leq e^{-c\left(1-\frac{1}{\Bar{L}}\right)}-\left(\frac{c}{\Bar{L}}+1\right)e^{-c} + f(\ep),
}
where $f(\ep) \to 0$ as $\ep \to 0$.
\end{restatable}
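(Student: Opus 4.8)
The plan is to directly evaluate the logarithm of the explicit product
\aln{
\overline{CI}_n = P \times \prod_{\gamma \leq 1 - \ep} B_{3}(\gamma)^{B_4(\gamma)}\times\prod_{\gamma > 1-\ep} n^{\ep B_4(\gamma)},
}
term by term, normalize by $n$, and take $n \to \infty$. Taking logs turns the product into a sum:
\aln{
\frac{1}{n}\log \overline{CI}_n = \frac{1}{n}\log P + \frac{1}{n}\sum_{\gamma \leq 1-\ep} B_4(\gamma)\log B_3(\gamma) + \frac{\ep \log n}{n}\sum_{\gamma > 1-\ep} B_4(\gamma).
}
First I would handle the easy terms. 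We have $\log P = K\log(L+1) = \Theta\big(\tfrac{n}{\log n}\cdot \log\log n\big)$, so $\tfrac1n \log P \to 0$. For the $\gamma > 1-\ep$ piece, $B_4(\gamma) = (1+\ep)\bar G(\gamma) \le (1+\ep)K$, and there are at most $\bar L \log n$ values of $\gamma$, so that whole sum is $O(K \log n) = O(n)$; multiplied by $\ep (\log n)/n$ it is $O(\ep \log n)$ — wait, that is not $o(1)$. So the bound $B_4(\gamma) \le K$ is too lossy; instead I need the actual order of $\bar G(\gamma)$ for $\gamma$ near $\bar L$, namely $\bar G(\gamma) \approx K n^{-\gamma}$ (up to the boundary effects of Lemma~\ref{lemma:MergeCandidates}), which for $\gamma > 1-\ep$ gives $\bar G(\gamma) = O(K n^{-(1-\ep)}) = O(n^{\ep}/\log n)$, so the sum is $O(n^{2\ep})$ and the contribution is $O(\ep n^{2\ep}(\log n)/n) \to 0$ for small $\ep$. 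I will need to pin down $\bar G(\gamma)$ precisely; the probability that a given read has overlap size exactly $\gamma \log n$ with its successor is essentially the probability that the next starting point lands at a specific offset and the corresponding bits match, which contributes a factor $n^{-\gamma}$ from the string-matching and an $O(1/\log n)$ factor from the gap distribution, giving $\bar G(\gamma) = \Theta(K n^{-\gamma})$ for $\gamma \le 1$ roughly, with the caveat that near $\gamma = \bar L$ there is also the ``genuine overlap'' contribution of constant order per offset.

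The main term is the middle sum $\sum_{\gamma \le 1-\ep} B_4(\gamma)\log B_3(\gamma)$, which I would convert to a Riemann sum. Write $\gamma$ ranging over $\Gamma$ in steps of $1/\log n$; then $\log B_3(\gamma) = \log\big((1+\ep)n^{1-\gamma}\big) = (1-\gamma)\log n + O(1)$, and $B_4(\gamma) = (1+\ep)\bar G(\gamma)$ with $\bar G(\gamma)$ of order $K n^{-\gamma} = \tfrac{c}{\bar L} \cdot \tfrac{n^{1-\gamma}}{\log n}$ (plus lower-order corrections I will need to justify carefully — this is where the bulk of the bookkeeping lives). Substituting,
\aln{
\frac{1}{n}\sum_{\gamma \le 1-\ep} B_4(\gamma)\log B_3(\gamma) \approx (1+\ep)\frac{c}{\bar L}\sum_{\gamma \le 1-\ep} \frac{n^{1-\gamma}}{\log n}\cdot \frac{(1-\gamma)\log n + O(1)}{n} = (1+\ep)\frac{c}{\bar L}\sum_{\gamma \le 1-\ep} \frac{(1-\gamma)n^{-\gamma} + O(n^{-\gamma}/\log n)}{1}.
}
Since the step size is $1/\log n$ and $n^{-\gamma} = 2^{-\gamma\log n}$, the sum $\sum_{\gamma \in \Gamma} (1-\gamma) n^{-\gamma}$ is a Riemann sum for $(\log n)\int_0^{\cdot}(1-u)2^{-u\log n}\,du$; more cleanly, $\tfrac{1}{\log n}\sum_{\gamma} (1-\gamma) n^{-\gamma} \to \int_0^{\infty}(1-u)e^{-u\ln 2 \cdot (\log n)}$... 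I need to be careful: with $\gamma$ in steps of $1/\log n$, $n^{-\gamma} = e^{-\gamma \ln n}$, and $\sum_\gamma n^{-\gamma} \cdot \tfrac{1}{\log n}$ is a Riemann sum over $u = \gamma$ with $du = 1/\log n$, but the integrand $e^{-u \ln n}$ depends on $n$; better to substitute $t = \gamma \log n \in \{1,2,\dots\}$ so the sum becomes $\sum_{t} (1 - t/\log n) 2^{-t} \to \sum_{t=1}^{\infty} 2^{-t} = 1$ as $n \to \infty$ (the $t/\log n$ correction vanishes). So I expect $\tfrac1n \cdot (\text{middle sum}) \to (1+\ep)\tfrac{c}{\bar L} \cdot (\text{something involving } \sum 2^{-t} \text{ and } \sum t 2^{-t})$. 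Carrying the $(1-\gamma)\log n$ factor: $\sum_t (1-t/\log n)(1 - t/\log n)2^{-t}\log n / n \cdots$ — this needs the $n^{1-\gamma} = n \cdot 2^{-t}$ identity so the $n$ cancels, leaving $\tfrac{c}{\bar L}\sum_t 2^{-t}\big[(1 - t/\log n)\log n + O(1)\big]/\log n \to \tfrac{c}{\bar L}\sum_t 2^{-t} = \tfrac{c}{\bar L}$, wait that drops the $(1-\gamma)$ weight. Let me not resolve the arithmetic here; the point is that after the substitution $t = \gamma \log n$ everything becomes a convergent geometric-type series, and the target value $e^{-c(1-1/\bar L)} - (\tfrac{c}{\bar L}+1)e^{-c}$ must emerge — I anticipate the $e^{-c(1-1/\bar L)}$ coming from a resummation $\sum_{k\ge 0}\tfrac{1}{k!}(\cdots)^k$ hidden in how $\bar G(\gamma)$ aggregates over islands of different sizes, and the $-(\tfrac{c}{\bar L}+1)e^{-c}$ as a correction matching the island-count and coverage terms.

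The hard part will be establishing the precise asymptotics of $\bar G(\gamma)$ — in particular, that the dominant contribution to $\overline{CI}_n$ comes not from a naive per-read bound but from correctly aggregating the merge-cost over the island structure, so that the series resums to the exponential $e^{-c(1-1/\bar L)}$ rather than something larger. A secondary subtlety is controlling the $O(1)$ and $O(1/\log n)$ error terms uniformly across the $\Theta(\log n)$ values of $\gamma$ and showing they contribute only to $f(\ep)$ (or vanish). I would structure the proof as: (1) split $\log \overline{CI}_n$ into the $\log P$, small-$\gamma$, and large-$\gamma$ pieces; (2) dispatch $\log P$ and the large-$\gamma$ piece as $o(n)$ using the sharp $\bar G(\gamma) = \Theta(Kn^{-\gamma})$ estimate; (3) for the small-$\gamma$ piece, substitute $t = \gamma\log n$, bound $\bar G(\gamma)$ above by $(1+\ep)\tfrac{c}{\bar L}\tfrac{n^{1-\gamma}}{\log n}(1 + o(1))$, and recognize the resulting finite sum as converging to the claimed closed form plus an $f(\ep)$ slack; (4) absorb all approximation errors into $f(\ep)$ and conclude. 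I will lean on Lemmas~\ref{lemma:MergeCandidates} and~\ref{lem:OverlapNumber} only for the concentration already built into the $B_3, B_4$ definitions; the new content is purely the asymptotic evaluation of a deterministic sum.
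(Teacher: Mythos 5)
Your high-level decomposition is right, but the proof hinges on a quantitative estimate for $\bar G(\gamma)$ that you get wrong, and this error propagates into the wrong scale for the sum-to-integral conversion.

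You assert $\bar G(\gamma) = \Theta(K n^{-\gamma})$, reasoning that the overlap-size event ``contributes a factor $n^{-\gamma}$ from string-matching.'' But recall Definition~\ref{def:overlap} and the line following it: the overlap size of $\vec Y_i$ is $(L - (T_{i+1}-T_i))^+$, a purely \emph{positional} quantity. It has nothing to do with whether bits match. Having overlap size exactly $\gamma\log n$ means the gap to the next starting point is exactly $(\bar L-\gamma)\log n$, and the probability of that is
\[
\frac{K-1}{n}\left(1-\frac{(\bar L-\gamma)\log n}{n}\right)^{K-2} \sim \frac{K}{n}\,e^{-c(1-\gamma/\bar L)},
\]
so $\bar G(\gamma) \sim \frac{c^2}{\bar L^2}\frac{n}{(\log n)^2}\,e^{-c(1-\gamma/\bar L)}$. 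This is $\Theta(n/(\log n)^2)$ for \emph{every} $\gamma$, with only the constant depending on $\gamma$ through a smooth exponential — utterly different from $\Theta(Kn^{-\gamma}) = \Theta(n^{1-\gamma}/\log n)$, which decays polynomially in $n$ as $\gamma$ grows. (You have conflated $G(\gamma)$ with $M_{\vec z}$ from Lemma~\ref{lemma:MergeCandidates}: $M_{\vec z}\approx Kn^{-\gamma(\vec z)}$ is a string-matching count, $G(\gamma)$ is not.) The $n^{-\gamma}$ factor you were looking for does appear in the middle sum, but it lives in $\log B_3(\gamma) \approx (1-\gamma)\log n$, not in $\bar G(\gamma)$.

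Because the true $\bar G(\gamma)$ varies smoothly in $\gamma$ (via $e^{-c(1-\gamma/\bar L)}$) rather than decaying like $2^{-\gamma\log n}$, your substitution $t=\gamma\log n$ and the attempted geometric resummation is the wrong track and cannot produce the target expression; you noticed yourself that the arithmetic ``drops the $(1-\gamma)$ weight,'' which is a symptom of the wrong scale. The correct move (and the one the paper takes) is to keep $\gamma$ as the variable: with step $\Delta_n = 1/\log n$, the normalized sum $\frac{\log n}{n}\sum_{\gamma\le 1-\ep}(1-\gamma)\bar G(\gamma)$ becomes a genuine Riemann sum for
\[
\frac{c^2}{\bar L^2}\int_0^{1-\ep}(1-\gamma)\,e^{-c(1-\gamma/\bar L)}\,d\gamma,
\]
which after the substitution $z=1-\gamma$ evaluates directly to $e^{-c(1-1/\bar L)}-(\tfrac{c}{\bar L}+1)e^{-c}$ (up to $\ep$-slack). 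There is no hidden island-aggregation resummation; the exponential is already present in $\bar G(\gamma)$. Finally, your fix for the $\gamma>1-\ep$ tail is more elaborate than needed and also rests on the same incorrect $\bar G(\gamma)\approx Kn^{-\gamma}$: the paper just uses $\sum_{\gamma}\bar G(\gamma)\le K$ (each read has one overlap size) to bound that term by $\ep(1+\ep)c/\bar L$, which vanishes as $\ep\to 0$.
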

This Lemma is proved in Appendix~\ref{CostSumLem}. From Lemma~\ref{lem:CostSumEvaluation} and by letting $\epsilon \to 0^+$, we can conclude that all rates
\begin{align*}
    R < 1 - e^{-c\left(1 - \frac{1}{\Bar{L}}\right)},
\end{align*}
are achievable. This completes the achievability proof of Theorem~\ref{Thm:Main}.

\section{Converse}
\label{Section:Converse}

To prove the converse, we borrow some insights from the achievable scheme in Section~\ref{Section:AchievableRates}. The idea is to have a genie-aided channel in which the reads are already merged into islands. However, this step needs to be carried out with care. More specifically, an omniscient genie, which merges all correct overlaps, would be too powerful and therefore result in a loose upper bound. Instead, we use a constrained genie that can only merge reads with overlap sizes above a certain threshold, and the resulting upper bound matches achievable rates of Section~\ref{Section:AchievableRates}.

The proof is organized as follows. We first look at the case when $\Bar{L} \leq 1$ and show the capacity is zero. Next, we provide the argument with an omniscient genie, which as we mentioned, results in a loose upper bound. We then show how to carefully constrain this genie to obtain a tighter upper bound that matches the achievable rates.


 \underline{\textbf{Short reads ($\Bar{L} < 1$).}} 
The intuition is that, when $\bar L < 1$, the number of possible distinct length-$L$ sequences is just $2^{\bar L \log n} = n^{\bar L} = o(n/\log{n}) = o(K)$, and many reads must be identical.
%
%

By Fano's inequality, followed by the counting argument similar to \cite{DNAStorageISIT}, we have that
\begin{align*}
    R &\leq  \frac1n H(\Y) \stackrel{(a)}{\leq} \frac1n \log\binom{K + 2^L -1}{K} \leq \frac Kn \log\left(\frac{K+2^L-1}{K}\right) \\
    &= \frac{K}{n}\log\left(1+\left(\frac{2^L - 1}{K}\right)\right) \stackrel{(b)}{\leq} K\frac{2^L}{nK} = cn^{\Bar{L} - 1} \to 0, 
\end{align*}
as $n \to \infty$, when $\Bar{L} < 1$. In the above set of equations $(a)$ is due to Lemma~1 in \cite{DNAStorageISIT} and $(b)$ is because $\log(1+x)\leq x,$ when $x > 0$.  Note that this holds irrespective of the value of $K$.

\underline{\textbf{Omniscient genie.}} As observed in Section~\ref{Section:AchievableRates}, merging the reads correctly to form islands is helpful while decoding the message. To capture this in the converse, suppose we have a genie that merges the reads into islands a priori and thus, forms variable-length non-overlapping substrings. Let $\Y^\prime$ be the multiset of the islands this genie creates from $\Y$.
Then, from Fano's inequality,
\begin{align}\label{eq:FanoStep}
    R &\leq \lim_{n \to \infty} \frac{1}{n} I(X^n;\Y) \leq \lim_{n \to \infty} \frac{H(\Y)}{n} 
    \nonumber \\
    &\leq \lim_{n \to \infty} \frac{H(\Y,\Y^\prime)}{n} =  \lim_{n \to \infty} \frac{H(\Y^\prime) + H(\Y|\Y^\prime)}{n}.
\end{align}

Consider the second term in the numerator of the above expression. Notice that this term intuitively looks at the uncertainty in the set of reads given the fully merged islands. It would be helpful to get a handle on the number of reads per island to bound this term. Thus we define $D$ as the maximum number of reads making up an island, and we have the following lemma.
\begin{restatable}{lemma}{MaxIslandsBound}\label{lem:MaxIslandsBound} 
(Maximum number of reads per island)
For any $\gamma_0 > -1/\log \left(1-e^{-c}\right)$, as $n \to \infty$,
\begin{align}\label{eqn:ConcBoundReordering}
    \Pr\left(D > \gamma_0 \log{n}\right)
    \to 0.
\end{align}
\end{restatable}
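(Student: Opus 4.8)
The plan is a union bound over the reads, combined with a sequential‑conditioning estimate showing that each extra read in an island contributes a multiplicative factor converging to $1-e^{-c}$.

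First I would translate $\{D\ge d\}$ into a statement about the gaps between consecutive start positions. With the reads sorted circularly so that $T_1\le T_2\le\dots\le T_K$, read $\vec Y_{i+1}$ lies in the same island as $\vec Y_i$ exactly when $T_{i+1}-T_i<L$, i.e.\ when the overlap size of $\vec Y_i$ is positive (Definition~\ref{def:overlap}). Hence an island of size at least $d$ is a block of $d$ consecutive (in circular sorted order) reads whose $d-1$ consecutive gaps are all strictly below $L$. For a read $\vec Y$, let $\mathcal A(\vec Y)$ be the event that the $d-1$ reads immediately following $\vec Y$ in circular sorted order, with start positions $T^{(1)},\dots,T^{(d-1)}$, satisfy $T^{(1)}-T<L$ and $T^{(j+1)}-T^{(j)}<L$ for $j=1,\dots,d-2$, where $T$ is the start of $\vec Y$. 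Any island of size $\ge d$ contains a read $\vec Y$ for which $\mathcal A(\vec Y)$ holds (its first read, or any read if the island spans the whole circle), so $\{D\ge d\}\subseteq\bigcup_{i=1}^K\mathcal A(\vec Y_i)$, and by exchangeability of the reads $\Pr(D\ge d)\le K\,\Pr(\mathcal A(\vec Y_1))$.

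Next I would bound $\Pr(\mathcal A(\vec Y_1))$ by revealing the reads following $\vec Y_1$ one at a time. By rotational symmetry we may condition on $T_1=0$, after which the remaining $K-1$ start positions are i.i.d.\ uniform and independent. Given that the first $j$ reads after $\vec Y_1$ sit at positions $0<s_1<\dots<s_j$ with each consecutive increment $<L$ (so $s_j<(d-1)L$), the conditional probability that the next read in sorted order falls within $L$ of $s_j$ equals $1-\bigl(1-L/(n-s_j)\bigr)^{K-1-j}$, which is at most $q_n:=1-\bigl(1-L/(n-\rho_n)\bigr)^{K-1}$ with $\rho_n:=(d-1)L$, uniformly over $j\le d-1$. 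Chaining the $d-1$ conditional bounds yields $\Pr(\mathcal A(\vec Y_1))\le q_n^{\,d-1}$, hence $\Pr(D\ge d)\le K q_n^{\,d-1}$.

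Finally I would take $d=\lfloor\gamma_0\log n\rfloor+1$ and let $n\to\infty$. For fixed $\gamma_0$, $\rho_n=(d-1)L=O((\log n)^2)=o(n)$, so $L/(n-\rho_n)=(L/n)(1+o(1))$, and since $KL/n=c$ and $K=\Theta(n/\log n)$ one gets $q_n\to 1-e^{-c}<1$; in particular $q_n<1$ for all large $n$. Using base‑$2$ logarithms, $q_n^{\,d-1}\le q_n^{-1}\,q_n^{\gamma_0\log n}=q_n^{-1}\,n^{\gamma_0\log q_n}$, and with $K=\Theta(n/\log n)$ this gives $\Pr(D>\gamma_0\log n)\le K q_n^{\,d-1}=O\bigl(n^{1+\gamma_0\log q_n}\bigr)$, which tends to $0$ as soon as $1+\gamma_0\log q_n<0$ for large $n$. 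Since $q_n\to 1-e^{-c}$ and $\log(1-e^{-c})<0$, this holds precisely when $\gamma_0>-1/\log(1-e^{-c})$, which is the hypothesis. The step I expect to be the main obstacle is the sequential conditioning: one must check that conditioning on the already‑revealed reads only compresses the remaining points into a slightly shorter arc $n-s_j$ (and hence can only help the chain continue), bound this uniformly by a single $q_n<1$, and simultaneously verify that the accumulated drift $(d-1)L$ stays $o(n)$ so that $q_n$ still converges to $1-e^{-c}$ rather than to something larger. The wrap‑around boundary and the discreteness of the start positions require minor care but do not affect these estimates.
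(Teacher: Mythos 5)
Your argument is correct and follows the same overall strategy as the paper: reduce $\{D > \gamma_0 \log{n}\}$ to a union over the $K$ reads of the event that a chain of roughly $\gamma_0\log{n}$ consecutive circular spacings are all below $L$, then bound that chain probability by a product of per-gap probabilities each converging to $1-e^{-c}$, and conclude from $K = \Theta(n/\log n)$ and $\gamma_0 > -1/\log(1-e^{-c})$. The only real difference is how the product bound is justified: the paper invokes a negative-association property of the spacings $U_i$ (with an external citation), whereas you derive it from scratch by sequentially conditioning on the order statistics and using the uniform per-step bound $q_n = 1 - \left(1 - L/(n-(d-1)L)\right)^{K-1}$ in place of the paper's $1-(1-L/n)^{K-1}$; since $(d-1)L = O(\log^2 n) = o(n)$, $q_n$ still converges to $1-e^{-c}$, so your version is a bit more self-contained at the cost of slightly more bookkeeping.
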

This lemma is proved in Appendix~\ref{MaxIslandBound}. 
Now define $\mathcal{B}$ as in \eqref{eq:Bdefn}. 
For a fixed $\gamma_0 > -1/\log \left(1-e^{-c}\right)$,
\begin{align}\label{eq:TheIslandTerm}
    H(\Y|\Y^\prime) &\leq H(\Y,\1_{\bar{\B},\, D \leq \gamma_0 \log n} | \Y^{\prime})
\leq 1 + H(\Y | \Y^{\prime},\1_{\bar{\B},\, D \leq \gamma_0 \log n}) \nonumber\\
    & \leq 1 +H(\Y|\Y^\prime,\overline{\mathcal{B}},D\leq \gamma_0\log{n})\Pr(\overline{\mathcal{B}},D\leq \gamma_0\log{n}) \nonumber \\
    & \quad + H(\Y|\Y^\prime,\{\mathcal{B}\text{ or }D > \gamma_0\log{n}\}))\left(\Pr(\mathcal{B} \text{ or } D> \gamma_0\log{n})\right) \nonumber \\
    &\leq 1 + H(\Y|\Y^\prime,\overline{\mathcal{B}},D\leq \gamma_0\log{n}) \nonumber \\
    & \quad + H(\Y|\{\mathcal{B}\text{ or }D > \gamma_0\log{n}\})(\Pr(\mathcal{B}) + \Pr(D> \gamma_0\log{n})).
\end{align}




Now, since $\Y$ is fully determined by $X^n$ and the read starting points $T^K$, we have that $H(\Y|\{\mathcal{B}\text{ or }D > \gamma_0\log{n}\}) \leq 2n$. We can thus claim that
\begin{align}\label{eq:SecondTermLim}
    &\lim_{n \to \infty} \frac1nH(\Y|\{\mathcal{B}\text{ or }D > \gamma_0\log{n}\})(\Pr(\mathcal{B}) + \Pr(D> \gamma_0\log{n})) \\
    &\quad \quad \stackrel{(a)}{\leq} \lim_{n \to \infty} 2(\Pr(\mathcal{B}) + \Pr(D> \gamma_0\log{n})) = 0,
\end{align}
where $(a)$ is due to the fact that $\Pr(\mathcal{B}) \to 0$ as $n \to \infty$ and Lemma~\ref{lem:MaxIslandsBound}.

We now focus on the first entropy term in \eqref{eq:TheIslandTerm}. Let $Q$ be the total number of substrings of length-$L$ in an island. Given the maximum number of reads per island $D \leq \gamma_0\log{n}$, we can say that the total length of an island cannot exceed $L\times\gamma_0\log{n} = \bar{L}\gamma_0\log^2{n}$. Therefore the total number of substrings per island $Q$ cannot exceed
\begin{align}
    Q \leq \bar{L}\gamma_0\log^2{n}.
\end{align}
Since there are $K^{\prime} \leq K$ islands in $\Y^{\prime}$, we can say that the total number of substrings of length-$L$ in $\Y^{\prime}$ is upper bounded by $KQ \leq K\times\bar{L}\gamma_0\log^2{n} = \gamma_0 cn\log{n}$. 

Now for the term $H(\Y|\Y^{\prime},\overline{\B},D\leq\gamma_0\log{n})$, $\Y$ can be thought of as an histogram over these $KQ$ length-$L$ substrings with the sum of the histogram entries being exactly $K$.
%
%
 Following the counting argument from Lemma~1 in \cite{DNAStorageISIT}, we can say
\begin{align}
    &H(\Y|\Y^{\prime},\overline{\mathcal{B}},D\leq \gamma_0\log{n}) \leq \log\binom{KQ + K -1}{K} \nonumber \\
    &\leq K\log\left(\frac{e(KQ + K - 1)}{K}\right)
    \leq \frac{cn}{\log{n}}\log{\left((ec\Bar{L}\gamma_0)\log^2{n}\right)}.
\end{align}
This implies that
\begin{align}\label{eq:FirstTermLim}
    \lim_{n \to \infty} \frac1n(1+H(\Y|\Y^{\prime},\overline{\mathcal{B}},&D\leq \gamma_0\log{n})) = 0.
\end{align}
Therefore from equations \eqref{eq:FirstTermLim} and \eqref{eq:SecondTermLim}, \eqref{eq:FanoStep} becomes
\begin{align}\label{eq:FanoReduced}
    R \leq \lim_{n \to \infty} \frac1n H(\Y^{\prime}).
\end{align}
As before, let $K'$ be the number of elements in $\Y^{\prime}$ and let the random variable $N_1,N_2,\dots,N_{K'}$ be the lengths of the islands. The result in \cite{TPCLP} gives us a way to upper bound this entropy term. It showed that the entropy of unordered sets of variable length binary strings can be upper bounded by the difference of two terms as ``cumulative coverage depth $-$ reordering cost", two terms that were introduced in \cite{TPCLP}.
Precisely it showed that for an unordered set $\Y^{\prime}$ with binary strings of lengths given by the random variables $N_1,N_2,\dots,N_{K'}$, where $K'$ is also a random variable we can say that
\begin{align*}
    \lim_{n \to \infty}  \frac{1}{n}H(\Y^{\prime}) &\leq \lim_{n \to \infty}\left(\frac{1}{n}E[K']E[N_1] - E[K']\frac{\log{n}}{n} \right) \\
    &\stackrel{(a)}={} (1-e^{-c}) - \frac{c}{\Bar{L}}e^{-c},
\end{align*}
when $\lim_{n\to \infty} \frac{\log{n}}{E[N_i]} \in (0,\infty)$ and $E[N_1^2/(\log{n})^2]$ is finite and bounded. This is indeed true and is proved in Appendix~\ref{FiniteIslandLength}. Here $(a)$ is due to Lemmas~\ref{lem:coverage} and \ref{lemma:IslandNumbers}.

This is an upper bound to the achievable rate we seek to match from Section~\ref{Section:AchievableRates}. Figure~\ref{fig:CapacityComparisons} shows that this is in fact a strict upper bound.

This intuitively indicates that the genie we use is probably too powerful. This makes sense, since discerning smaller overlaps probably adds a cost. In fact, our achievable rate calculated this cost (see Lemma~\ref{lem:CostSumEvaluation}).

\underline{\textbf{Constrained genie.}} Let's re-introduce a genie, except this genie can only merge reads with overlap sizes of at least $\delta\log{n}$. Following the nomenclature in \cite{LanderWaterman}, we call these \emph{apparent islands}. Let $\Y^\prime_{\delta}$ be the set formed by this genie. Following the same steps as above, we can say that
\begin{align}\label{eq:ApparantIslandsBound}
    R \leq \lim_{n \to \infty} \frac{1}{n}H(\Y^{\prime}_{\delta}).
\end{align}

Let $K^{\prime\prime}$ be the number of apparent islands formed. Let $N^{\delta}_1,\dots,N^{\delta}_{K^{\prime\prime}}$ be the length of the islands. 
Like before we use the result in \cite{DNAStorageISIT} to bound \eqref{eq:ApparantIslandsBound}. We omit the many steps that are exactly the same as that for the omniscient genie. 
Employing the result in \cite{TPCLP} again, we obtain
\begin{align}\label{eq:ConstrainedGenieBound}
    \lim_{n \to \infty}  &\frac{1}{n}H(\Y^{\prime}_{\delta}) \leq \lim_{n \to \infty}\left(\frac{1}{n}E[K^{\prime\prime}]E[N_1] - E[K
    ^{\prime\prime}]\frac{\log{n}}{n} \right).
\end{align}
Setting $\sigma := 1-\frac{\delta}{\Bar{L}}$, it can be shown that
\begin{align}\label{eq:ApparantIslandLength}
    \lim_{n \to \infty}\left(\frac{1}{n}E[K^{\prime\prime}]E[N_1] - E[K
    ^{\prime\prime}]\frac{\log{n}}{n} \right) = (1-e^{-c\sigma}) + c(1-\sigma)e^{-c\sigma} - \frac{c}{\Bar{L}}e^{-c\sigma}.
\end{align}
The proof of this is presented in 
Appendix~\ref{ApparantIslandLengthAppendix}.
Therefore, we can say that
\begin{align}
\label{eq:minimize}
   \lim_{n \to \infty} \frac1n H(\Y^{\prime}_{\delta}) \leq 
(1-e^{-c\sigma}) + c(1-\sigma)e^{-c\sigma} - \frac{c}{\Bar{L}}e^{-c\sigma}.
\end{align}

The final step is to minimize this bound over $\sigma$. The minimum value of \eqref{eq:minimize} is achieved when $\sigma = 1 - \frac{1}{\Bar{L}}$ (or $\delta =1$). Therefore, substituting this value, we obtain
\begin{align}
    \lim_{n \to \infty} \frac1n H(\Y^{\prime}) \leq 1 - e^{-c\left(1-\frac{1}{\Bar{L}}\right)}.
\end{align}

An interesting observation here is that $\delta = 1$ implies the constrained genie should merge all reads with overlap sizes greater than $\log{n}$ for the converse to match the achievable rate. This is consistent with Lemma~\ref{lemma:MergeCandidates}, which implies that multiple merge candidates are likely to exist when the overlap size is less than or equal to $\log{n}$.
This completes the converse proof of Theorem~\ref{Thm:Main}.


\section{Conclusion}

\label{Section:Conclusion}

In this work, motivated by applications in DNA data storage, we introduced the Shotgun Sequencing Channel (SSC).
We characterized the SSC capacity exactly. This capacity was shown to strictly upper bound the capacity of a shuffling-sampling \cite{DNAStorageISIT} channel, which modelled DNA storage systems that sampled uniformly at random from a set of short length strings. In fact, we showed high gains for the same, by  showing that the capacity of the SSC goes to one for high coverage depth, allowing highly reliable reconstruction of the string, even for short read lengths $L \approx \log{n}$. In contrast, for the shuffling-sampling channel \cite{DNAStorageISIT} the capacity goes to zero in the same regime, implying we wouldn't be able to recover the set of short strings in the same regime. This result reveals the high capacity gains that would be achieved if data could be stored in a long DNA molecule instead of short-length molecules. We believe exploiting overlaps, which are generally present in reads that are shotgun sequenced, allows for gains in terms of capacity.

 
We further showed that when we shotgun sequence a DNA string that is a codeword from a codebook, we can reduce the minimum required read length by a factor of $2$ and the number of reads sampled by a factor of $\log{n}$, while still admitting arbitrarily close to perfect reconstruction.

DNA synthesis and storage in general admit errors that can cause bit flips and erasures. This has been studied before in the context of the shuffling-sampling channel \cite{noisyshuffling,lenz2018coding,lenz2020achieving}.  This motivates studying a noisy version of the SSC, where the noise can be modeled as concatenated binary symmetric or erasure channels. It is also worthwhile to note that our analysis is restricted to the asymptotic regime as $n \to \infty$. It may be of interest to see the reliability of such systems in the finite blocklength regime.
\newpage

\bibliographystyle{ieeetr}
{\footnotesize 
\bibliography{main}
}

\newpage

\appendices
\section{Proof of Lemma~\ref{lem:coverage}}\label{CovLemma}
 \CoverageLemma*
 
To prove the above result we first compute $E[\Phi]$ as follows:
\begin{align}
    E[\Phi] = \frac{1}{n}\sum_{i=1}^{n}E\left[\mathbf{1}_{\{X_i\text{ is covered by }\Y\}}\right]
    &= \Pr(X_n \text{ is covered by } \Y) \nonumber \\
    &= 1- \Pr\left(X_n \text{ is not covered by } \Y \right) \nonumber \\
    & = 1 - \Pr(X_n \text{ is not covered by } \vec{Y}_i,\;  \forall i \in [1:K]) \nonumber \\
    &\stackrel{(a)}{=} 1 - \Pr(X_n \text{ is not covered by } \vec{Y}_1)^K \nonumber \\
    &= 1 - (1 - \Pr(X_n \text{ is covered by } \vec{Y}_1))^K \nonumber \\
    &\stackrel{(b)}{=} 1- \left(1 - \frac{L}{n}\right)^K \to 1 - e^{-c},
\end{align}
as $n \to \infty$. Here, $(a)$ is due the starting points being picked independently and uniformly at random, and $(b)$ is due to the fact that $X_n$ needs to start in $L$ (contiguous) bits out of $n$ total bits to cover $X_n$.
From the definition of limit, we know that there for every $\epsilon >0$, 
there exists an $N$ such that, for $n \geq N$, we have
\begin{align*}
    |E[\Phi] - (1 - e^{-c})| < \epsilon(1-e^{-c})/2.
\end{align*}
If it also holds that $|\Phi-E[\Phi]|<\epsilon(1-e^{-c})/2$, then by triangle's inequality, we obtain
\begin{align}
   |\Phi - (1 - e^{-c})| \leq |\Phi - E[\Phi]| + |E[\Phi] - (1 - e^{-c})| \leq \epsilon(1-e^{-c}),
\end{align}
for sufficiently large $n$.
Defining $\ep' = \ep(1-e^{-c})/2$ and using Chebyshev's inequality for $n$ large enough, we have

\begin{align}\label{eq:CovChebyshev}
&\Pr\left(\left|\Phi - \left(1 - e^{- c}\right)\right| > \epsilon\left(1 - e^{- c}\right)\right) \nonumber \\
&\leq
\Pr(\left|\Phi - E[\Phi]\right| > \epsilon') 
\leq \frac{\text{Var}(\Phi)}{\epsilon'^2}.
\end{align}
Let's calculate $\text{Var}(\Phi)$. By linearity of covariance, we have
\begin{align}\label{eq:CoverageVarianceCalc}
    \text{Var}(\Phi) = \frac{1}{n^2}\sum_{i,j}\text{Cov}\left(\mathbf{1}_{\{X_i\text{ is covered by }\Y\}}\mathbf{1}_{\{X_j\text{ is covered by }\Y\}}\right).
\end{align}
Now, note that for $(i-j)\mod n > \bar{L}\log{n}$, we can calculate the covariance as 
\begin{align*}
    &\text{Cov}\left(\mathbf{1}_{\{X_i\text{ is covered by }\Y\}}\mathbf{1}_{\{X_j\text{ is covered by }\Y\}}\right) \\
    &= E\left[\mathbf{1}_{\{X_i\text{ is covered by }\Y\}}\mathbf{1}_{\{X_j\text{ is covered by }\Y\}}\right] - E\left[\mathbf{1}_{\{X_i\text{ is covered by }\Y\}}\right]E\left[\mathbf{1}_{\{X_j\text{ is covered by }\Y\}}\right] \\
    &\stackrel{(a)}{\leq} \left(1- \left(1 - \frac{L}{n}\right)^K\right)\left(1- \left(1 - \frac{L}{n}\right)^{K-1}\right) - \left(1- \left(1 - \frac{L}{n}\right)^K\right)^2 \leq 0.
\end{align*}
Here, $(a)$ is because there exists at least one read that has to cover $X_i$ and not $X_j$ when $(i-j)\text{ mod } n > \bar{L}\log{n}$.
Therefore, we can upper bound \eqref{eq:CoverageVarianceCalc} as
\begin{align*}
    \text{Var}(\Phi) &\leq \frac{1}{n^2}\left(\sum_{i=1}^{n} \text{Var}(\mathbf{1}_{\{X_i\text{ is covered by }\Y\}}) + \sum_{i,j:((i-j)\text{ mod } n \leq \bar{L}\log{n})}\text{Cov}\left(\mathbf{1}_{\{X_i\text{ is covered by }\Y\}}\mathbf{1}_{\{X_j\text{ is covered by }\Y\}}\right)\right) \\
    &\leq \frac{1}{n^2}(n + \bar{L}n\log{n}).
\end{align*}
This leads to \eqref{eq:CovChebyshev} being upper bounded as
\begin{align}
    \Pr(\left|\Phi - E[\Phi]\right| > \epsilon' ) 
\leq \frac{\text{Var}(\Phi)}{\epsilon'^2} \leq \left(\frac1n + \frac{\bar{L}\log{n}}{n}\right)\frac1{\epsilon'^{2}}  \to 0,
\end{align}
as $n \to \infty$, which completes the proof.

\section{Proof of Lemma~\ref{lemma:IslandNumbers}}\label{IslandNumberLemma}
\IslandNumbers*

To prove the above lemma, first we note that we can count the number of islands by counting the reads that have no overlap, since any read with no (suffix) overlap must be the last read of an island.
Therefore, we have
\begin{align}
E[K^{\prime}]= \sum_{i=1}^{K}E\left[\mathbf{1}_{\{\vec{Y}_i\text{ has no overlap}\}}\right] = K\Pr(\text{a given read has no overlap}) = K\left(1-\frac{L}{n}\right)^{K-1}.
\end{align}


Note that $\lim_{n \to \infty}
\frac{\log{n}}{n}E[K'] = \frac{c}{\bar{L}}e^{-c}$. Thus, we can say (from the definition of limit) that for any $\epsilon/2 >0$, $|E[K'] - Ke^{-c}| < Ke^{-c}\epsilon/2$, for $n$ large enough. 
Therefore if $|K'-E[K']|<Ke^{-c}\epsilon/2$, then by triangle's inequality
\begin{align*}
    |K' - Ke^{-c}| \leq |K' - E[K']| + |E[K'] - Ke^{-c}| \leq Ke^{-c}\epsilon,
\end{align*}
for $n$ large enough.
Using Chebyshev's inequality, for $n$ large enough we have that
\begin{align}\label{eq:ChebyshevIslands}
& \Pr\left(\middle|K' - Ke^{-c}\middle|\geq \epsilon Ke^{-c}\right) \nonumber \\
&\leq \Pr\left(\middle|K' - E[K']\middle|\geq \epsilon' Ke^{-c}\right) \leq \frac{\text{Var}(K')}{\epsilon'^2K^2e^{-2c}}  \to 0,
\end{align}
as $n \to \infty$ and $\epsilon' = \epsilon/2$. To see this, we can calculate $\text{Var}(K')$ as 
\begin{align*}
   \text{Var}(K') &= \sum_{i,j}\text{Cov}\left(\mathbf{1}_{\{\vec{Y}_i\text{ has no overlap}\}}\mathbf{1}_{\{\vec{Y}_j\text{ has no overlap}\}}\right).
\end{align*}
If $i\neq j$, then we can compute the covariances as follows:
\begin{align*}
& \text{Cov}\left(\mathbf{1}_{\{\vec{Y}_i\text{ has no overlap}\}}\mathbf{1}_{\{\vec{Y}_j\text{ has no overlap}\}}\right) \\
   &= E\left[\mathbf{1}_{\{\vec{Y}_i\text{ has no overlap}\}}\mathbf{1}_{\{\vec{Y}_j\text{ has no overlap}\}}\right] - E\left[\mathbf{1}_{\{\vec{Y}_1\text{ has no overlap}\}}\right]^2 \\
   &\stackrel{(a)}{\leq} \Pr(\vec{Y}_i\text{ and }\vec{Y}_j \text{ don't have overlaps }|\vec{Y}_i \text{ and }\vec{Y}_j \text{ don't overlap with each other}) \\
   &+ \Pr(\vec{Y}_i \text{ and }\vec{Y}_j \text{ overlap with each other}) - \left(1-\frac{L}{n}\right)^{2(K-1)}\\
   &\stackrel{(b)}{\leq} \left(1 - \frac{L}{n}\right)^{2(K-2)} + \frac{k\log{n}}{n} - \left(1-\frac{L}{n}\right)^{2(K-1)} \leq 0,
\end{align*}
for $n$ large enough and a fixed constant $k$. $(a)$ is due the Law of total probability and $(b)$ is because conditioned on the fact that two reads don't overlap with each other, the probability that they don't overlap are independent of one another. We calculate the worst case probability of the rest of the reads overlapping with them. This can be done assuming there $K-2$ reads left at most for each.  Therefore for $n$ large enough, we can say that
\begin{align*}
    \text{Var}(K') \leq \sum_{i=1}^{K}\text{Var}\left(\mathbf{1}_{\{\vec{Y}_i\text{ has no overlap}\}}\right) \leq K.
\end{align*}
Therefore, \eqref{eq:ChebyshevIslands} can be upper bounded as
\begin{align}
    \Pr\left(\middle|K' - Ke^{-c}\middle|\geq \epsilon Ke^{-c}\right) \leq \frac{1}{\ep'^2Ke^{-2c}} \to 0,
\end{align}
as $n \to \infty$.

\section{Proof of Lemma~\ref{lemma:MergeCandidates}}\label{MergeCandidatesLemma}
\MergeCandidates*

We will first prove the first part of the lemma and then the second part of the lemma. Both proofs are similar with some small variations.
We first look at a concentration result on the number of times $\vec{z}$ appears on a length-$n$ i.i.d. Bern$(1/2)$~string $X^n$. We will then use this to prove Lemma~\ref{lemma:MergeCandidates}.
Let $N_{\vec{z}} = \sum_{i=1}^{n} \mathbf{1}_{\{\vec{z} \text{ present starting at the $i$th symbol of $X^n$}\}} := \sum_{i=1}^{n} \mathbf{1}_{\text{F}_i}$. Note that $E\left[N_{\vec{z}}\right]  = n \times 2^{-\gamma(\vec{z}) \log{n}} = n^{1-\gamma(\vec{z})}$.



\begin{lemma}\label{lemma:StringRepeats}
For all $\epsilon > 0$,
\begin{align}
\Pr\left(\left|N_{\vec{z}}-n^{1-\gamma(\vec{z})}\right|\geq \epsilon n^{1-\gamma(\vec{z})}\right) \leq (2\Bar{L} \log{n})e^{-(n^{\ep})\epsilon^2/(2\Bar{L}\log{n})}.
\end{align}
\end{lemma}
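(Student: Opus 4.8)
The plan is to neutralize the dependence among the indicators $\mathbf{1}_{F_i}$ by decomposing $[1:n]$ into residue classes along which the indicators become independent, then apply a Chernoff bound on each class followed by a union bound. Throughout, the regime of interest is the one in which Lemma~\ref{lemma:MergeCandidates} invokes this estimate, namely $\gamma(\vec z)\le 1-\ep$ (equivalently $|\vec z|\le(1-\ep)\log n$), so that $n^{1-\gamma(\vec z)}\ge n^{\ep}$; without such a restriction the stated bound cannot hold, since for $\gamma(\vec z)$ close to $\bar L>1$ one has $E[N_{\vec z}]=n^{1-\gamma(\vec z)}<1$ and $N_{\vec z}=0$ occurs with probability close to $1$.

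First I would record the locality of $F_i$: the event $F_i$ depends only on the bits $X_i,X_{i+1},\dots,X_{i+|\vec z|-1}$ (indices mod $n$), so $F_i$ and $F_j$ are independent whenever the corresponding length-$|\vec z|$ windows are disjoint. Writing $\ell:=|\vec z|=\gamma(\vec z)\log n$, partition $[1:n]$ into the $\ell$ residue classes modulo $\ell$. Along a fixed class the window start points are spaced exactly $\ell$ apart, hence the windows are pairwise disjoint, each window content is uniform on $\{0,1\}^{\ell}$, and the indicators $\mathbf{1}_{F_i}$ restricted to that class are i.i.d.\ $\Ber(2^{-\ell})=\Ber(n^{-\gamma(\vec z)})$. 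Setting $S_r:=\sum_{i\in\text{class }r}\mathbf{1}_{F_i}$ we get $S_r\sim\mathrm{Bin}(n_r,n^{-\gamma(\vec z)})$ with $n_r\in\{\lfloor n/\ell\rfloor,\lceil n/\ell\rceil\}$, and $N_{\vec z}=\sum_{r=1}^{\ell}S_r$ with $E[N_{\vec z}]=\sum_r E[S_r]=n^{1-\gamma(\vec z)}$.

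Next I would use the elementary observation that if $|S_r-E[S_r]|\le\ep E[S_r]$ for every $r$, then $|N_{\vec z}-E[N_{\vec z}]|\le\ep E[N_{\vec z}]$ by the triangle inequality, so that
\[
\Pr\!\left(|N_{\vec z}-n^{1-\gamma(\vec z)}|\ge\ep\,n^{1-\gamma(\vec z)}\right)\le\sum_{r=1}^{\ell}\Pr\!\left(|S_r-E[S_r]|\ge\ep E[S_r]\right).
\]
A multiplicative Chernoff bound controls each summand by $2\exp(-c\,\ep^2 E[S_r])$ for an absolute constant $c$ and $\ep\in(0,1)$, and since $E[S_r]=n_r n^{-\gamma(\vec z)}\ge\lfloor n/\ell\rfloor\, n^{-\gamma(\vec z)}\ge n^{1-\gamma(\vec z)}/(\bar L\log n)-1\ge n^{\ep}/(\bar L\log n)\cdot(1-o(1))$ in the regime $\gamma(\vec z)\le 1-\ep$ with $\ell\le\bar L\log n$, each summand is at most $2\exp(-n^{\ep}\ep^2/(2\bar L\log n))$ for $n$ large. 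Summing over the at most $\bar L\log n$ classes produces the claimed bound $(2\bar L\log n)\exp(-n^{\ep}\ep^2/(2\bar L\log n))$.

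The only genuine obstacle is the first step: the overlapping windows make the $\mathbf{1}_{F_i}$ genuinely dependent, and the residue-class decomposition is the device that restores independence. The wraparound at the ends of $X^n$ perturbs the class sizes $n_r$ by only $O(1)$ and is absorbed into the constants, and everything after the decomposition is a routine Chernoff-plus-union-bound computation. Once Lemma~\ref{lemma:StringRepeats} is in hand, Lemma~\ref{lemma:MergeCandidates} follows by a further union bound over the (polynomially many) strings $\vec z\in\mathcal Z$ together with a comparison of $M_{\vec z}$ and $N_{\vec z}$, but that is outside the scope of this statement.
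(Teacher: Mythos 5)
Your proof is correct and takes essentially the same route as the paper: the paper also splits $N_{\vec z}$ into $\gamma(\vec z)\log n$ sub-sums whose terms are spaced $\gamma(\vec z)\log n$ apart (your residue classes mod $\ell$), restoring independence within each sub-sum, and then applies a triangle inequality, union bound, and Hoeffding/Chernoff, with the final step holding only for $\gamma(\vec z)\le 1-\ep$. Your explicit remark that this restriction is necessary for the statement to hold is a point the paper only notes in passing in its step $(b)$, but it is the same argument.
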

\begin{proof}
We can rewrite $N_{\vec{z}}$ as
\aln{
    N_{\vec{z}} &= \sum_{i=1}^{n} \mathbf{1}_{F_i} =
    \underbrace{
    \sum_{t=0}^{\frac{n}{\gamma(\vec z) \log n} - 1}\mathbf{1}_{F_{1+t \gamma(\vec z) \log n}}}_{N_{1,\vec{z}}} + \underbrace{
    \sum_{t=0}^{\frac{n}{\gamma(\vec z) \log n} - 1}\mathbf{1}_{F_{2+t \gamma(\vec z) \log n}}}_{N_{2,\vec{z}}} + \dots +
    \underbrace{
    \sum_{t=0}^{\frac{n}{\gamma(\vec z) \log n} - 1}\mathbf{1}_{F_{\gamma(\vec z) \log n +t \gamma(\vec z) \log n}}}_{N_{\gamma(\vec z) \log n ,\vec{z}}}.
}
Notice that each summation 
deals with starting locations that are at least $\gamma(\vec z)\log n$ symbols apart, and the resulting random variables $\1_{F_i}$ are independent.
Now note that by triangle inequality if each of 
\begin{align}
 \left|N_{i,\vec{z}} - \frac{n^{1-\gamma(\vec{z})}}{\gamma(\vec{z})\log{n}}\right|\leq \epsilon \frac{n^{1-\gamma(\vec{z})}}{\gamma(\vec{z})\log{n}}
\end{align} 
then $\left|N_{\vec{z}}-n^{1-\gamma(\vec{z})}\right|\leq \epsilon n^{1-\gamma(\vec{z})}$. Employing the union bound, we can say that
\begin{align}
    \Pr\left(\left|N_{\vec{z}}-n^{1-\gamma(\vec{z})}\right|\geq \epsilon n^{1-\gamma(\vec{z})}\right) 
    &\leq \sum_{i=1}^{\gamma(\vec{z}){\log{n}}}\Pr\left(\left|N_{i,\vec{z}}-\frac{n^{1-\gamma(\vec{z})}}{\gamma(\vec{z})\log{n}}\right|\geq \epsilon \frac{n^{1-\gamma(\vec{z})}}{\gamma(\vec{z})\log{n}}\right) \nonumber \\
    &\leq (\Bar{L}\log{n})\Pr\left(\left|N_{1,\vec{z}}-\frac{n^{1-\gamma(\vec{z})}}{\gamma(\vec{z})\log{n}}\right|\geq \epsilon \frac{n^{1-\gamma(\vec{z})}}{\gamma(\vec{z})\log{n}}\right) \nonumber \\
    &\stackrel{(a)}{\leq} (2\Bar{L} \log{n}) e^{\frac{-n\times n^{-\gamma(\vec{z})}\epsilon^2}{2\gamma(\vec{z})(\log{n})(1-n^{-\gamma(\vec{z})})}} \nonumber \\
    &\leq (2\Bar{L} \log{n})e^{-n^{1-\gamma(\vec{z})}\epsilon^2/(2\Bar{L}\log{n})} \nonumber\\
    &\stackrel{(b)}{\leq} (2\Bar{L} \log{n})e^{-(n^{\ep})\epsilon^2/(2\Bar{L}\log{n})}
\end{align}
where step $(b)$ holds for all $\gamma(\vec z) \leq 1- \ep$.
Here $(a)$ is due to the following form of Hoeffding's inequality:
\begin{lemma}\label{lemma:ModifiedChernoff}
For i.i.d. Bernoulli $X_1,X_2\dots X_n$ with parameter $p$,
\begin{align}
    \Pr\left(\left|\frac{1}{n}\sum_{i=1}^{n}X_i - p \right| \geq \epsilon p\right) \leq 2e^{-nD(p+\epsilon||p)} \leq 2e^{-np\epsilon^2/(1-p)}
\end{align}
\end{lemma}
This ends the proof of Lemma~\ref{lemma:StringRepeats}.
\end{proof}

Now given the string $X^n$, define $\mathcal{D} := \{ \text{Starting point indices of $\vec{z}$ in $X^n$}\}$. Therefore $M_{\vec{z}} = \sum_{i=1}^K \mathbf{1}_{\{\text{$i$th read has starting point in $\mathcal{D}$}\}}$. Since the generation of these reads is independent and uniformly at random, the random variables in the summation are independent of each other (given ${\mathcal{D}}$). Now we can say
\begin{align}\label{eq:Mzanalysis}
    \Pr&\left(\bigcup_{z 
    \in \mathcal{Z},\gamma(\vec z) < 1}\left\{\left|M_{\vec{z}}-K n^{-\gamma(\vec z)}\right|\geq  \ep K n^{-\gamma(\vec z)}\right\}\right)  \nonumber \\
    &\stackrel{(a)}{\leq} {d}n\max_{\vec{z}\in\mathcal{Z},\gamma(\vec z)< 1}\Pr\left(\left\{\left|M_{\vec{z}}-K n^{-\gamma(\vec z)}\right|\geq  \ep K n^{-\gamma(\vec z)}\right\}\right) \nonumber \\
    &\stackrel{(b)}{\leq} {d}n\max_{\vec{z}\in\mathcal{Z},\gamma(\vec z)< 1}\left(\Pr\middle(\{\middle|M_{\vec{z}}-K n^{-\gamma(\vec z)}\middle|\geq  \ep K n^{-\gamma(\vec z)}\middle\}\middle|\middle\{|N_{\vec{z}}-n^{1-\gamma(\vec{z})}|\middle\}\leq \epsilon n^{1-\gamma(\vec{z})}\middle)\right) \nonumber \\
    &+ \Pr\left(\middle|N_{\vec{z}}-n^{1-\gamma(\vec{z})}\middle|\geq \epsilon n^{1-\gamma(\vec{z})}\middle)\right) \nonumber\\
    &\stackrel{(c)}{\leq} {d}n\left(2e^{-\frac{cn^{\ep}}{2\Bar{L}\log{n}}\epsilon^2} + (2\Bar{L} \log{n})e^{-n^{\ep}\epsilon^2/2\Bar{L}\log{n}}\right) \to 0,
\end{align}
as $n \to \infty$. Here $(a)$ is due to the union bound on the number of $\vec{z}$ for $\gamma(\vec z) \leq 1-\epsilon$ (which is $\leq {d}n$, {for some fixed constant ${d}$, since there are at most $\sum_{\gamma \leq 1-\ep}2^{\gamma\log{n}} = \sum_{\gamma \leq 1-\ep}n^{\gamma}$ vectors of that size, and for $\gamma \leq1-\ep$, $\sum_{\gamma \leq 1-\ep}n^{\gamma} \leq n^{1-\ep}\log{n} \leq dn$, for $n$ large enough),}
$(b)$ is from the law of total probability and $(c)$ is due to the Hoeffding's inequality. 
To see the last step note that $\mathbf{1}_{\{\text{$i$th read starts in $\mathcal{D}$}\}}$ is a Bern($p$) RV. When conditioned on the event $\left|N_{\vec{z}}-n^{1-\gamma(\vec{z})}\right|\leq \epsilon n^{1-\gamma(\vec{z})}$, $p \in [(1-\epsilon)n^{-\gamma(\vec{z})},(1+\epsilon)n^{-\gamma(\vec{z})}]$. We can apply Hoeffding's inequality assuming the worst case of $p = (1+\epsilon)n^{-\gamma(\vec{z})}$. {Now since $\gamma \leq 1 - \ep$ in the above case, this implies that $np \leq n\times(1+\epsilon)n^{-\gamma(\vec{z})} \leq (1+\ep)n^{\ep}$.}
Now apply this bound on $np$ to  Lemma~\ref{lemma:ModifiedChernoff}, to obtain the result.

Now for the second part of the proof we employ a similar trick as before. Using the same split on $N_{\vec z}$ as before we have that
\begin{align}\label{eq:prelimpart2apen3}
    &\Pr\left(N_{\vec{z}} \geq \frac{L\times n^{\ep}}{c}\right) \stackrel{(a)}{\leq} (\bar L \log{n})\Pr\left(N_{1,\vec z} \geq \frac{n^{\ep}}{c}\right) \\
    &=  (\bar L \log{n})\Pr\left(N_{1,\vec z} - \frac{n^{1-\gamma}}{L}\geq \frac{n^{\ep}}{c} - \frac{n^{1-\gamma}}{L}\right) \leq (\bar L \log{n})\exp{\left(-\frac{\left(n^{\ep}/c - n^{1-\gamma}/L\right)^2}{2n^{1-\gamma}(1-n^{-\gamma})}\right)} \\
    &\leq (\bar L \log{n})\exp{\left(-n^{(2\ep-(1-\gamma'))}\frac{\left(1/c - n^{1-\gamma-\ep}/L\right)^2}{2}\right)},
\end{align}
where $(a)$ is due to the union bound, specifically because $\gamma \leq \bar{L}$, which implies that there are at most $\bar{L}\log{n}$ terms. Here we use the following version of Hoeffding's inequality:
\begin{lemma}
For i.i.d. Bernoulli $X_1,X_2\dots X_n$ with parameter $p$
\begin{align}
    \Pr\left(\sum_{i=1}^{n}X_i - np \geq x\right)  \leq e^{-x^2/(2np(1-p))}
\end{align}
\end{lemma}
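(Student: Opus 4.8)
The plan is to prove this exactly as one proves the classical one-sided sub-Gaussian (Chernoff) tail of a Binomial$(n,p)$ variable: exponential Markov, tensorize, control the single-letter moment generating function, and then optimize the exponential parameter. Write $S = \sum_{i=1}^n X_i$ and fix $\lambda > 0$. Markov's inequality applied to the nonnegative random variable $e^{\lambda(S-np)}$ gives $\Pr(S-np\geq x) \leq e^{-\lambda x}\,E[e^{\lambda(S-np)}]$, and since the $X_i$ are i.i.d.\ the moment generating function tensorizes: $E[e^{\lambda(S-np)}] = \big(E[e^{\lambda(X_1-p)}]\big)^n$. So the whole problem reduces to a good bound on the single-letter MGF $\varphi(\lambda) := (1-p)e^{-\lambda p} + p\,e^{\lambda(1-p)}$, after which one substitutes back and minimizes over $\lambda$.

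The crux is the statement that a centered Bernoulli is sub-Gaussian with variance proxy equal to its own variance, that is, $\psi(\lambda) := \log\varphi(\lambda) \leq \tfrac12\lambda^2 p(1-p)$ for $\lambda \geq 0$. I would prove this by a second-order argument on the cumulant generating function $\psi$: one checks directly that $\psi(0)=0$ and $\psi'(0) = E[X_1-p] = 0$, and that $\psi''(\lambda)$ equals the variance of $X_1$ under the exponentially tilted law with parameter $\tilde p(\lambda) = pe^\lambda/(1-p+pe^\lambda)$, namely $\psi''(\lambda) = \tilde p(\lambda)\big(1-\tilde p(\lambda)\big) = p(1-p)e^\lambda/(1-p+pe^\lambda)^2$; bounding this last expression by $p(1-p)$ is then an elementary inequality, and integrating $\psi''$ twice from $0$ yields the claim. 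Plugging it in gives $\Pr(S-np\geq x) \leq \exp\!\big(-\lambda x + \tfrac12 n\lambda^2 p(1-p)\big)$, whose minimizer over $\lambda>0$ is $\lambda^\star = x/(np(1-p))$; substituting $\lambda^\star$ collapses the exponent to exactly $-x^2/(2np(1-p))$, which is the assertion.

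The only step I expect to require real care is the single-letter sub-Gaussian estimate $\psi(\lambda)\leq\tfrac12\lambda^2 p(1-p)$ (uniformly in $\lambda\geq 0$); everything else — Markov, tensorization, the one-variable optimization — is mechanical. If one prefers to avoid the tilted-variance computation, one can instead invoke Hoeffding's lemma for $[0,1]$-valued summands together with the Bennett/Bernstein refinement, which delivers a tail of the same exponential order; and that (weaker-looking) form is in fact all the downstream use in \eqref{eq:prelimpart2apen3} requires, since there $x$ is polynomially larger than the mean and any $\exp(-\Omega(n^{\epsilon}))$-type decay still survives the $\bar{L}\log n$ union bound.
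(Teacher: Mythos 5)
Your first two steps (exponential Markov and tensorization) are fine, as is the computation $\psi''(\lambda) = \tilde p(\lambda)\bigl(1-\tilde p(\lambda)\bigr)$. The fatal gap is the step you call an ``elementary inequality'': for $p < 1/2$ and $\lambda > 0$ it is \emph{false} that $\tilde p(\lambda)\bigl(1-\tilde p(\lambda)\bigr) \le p(1-p)$. Since $\tilde p(\lambda)$ is increasing in $\lambda$, for small $p$ the tilted parameter initially moves \emph{toward} $1/2$, and the map $t\mapsto t(1-t)$ is increasing on $[0,1/2]$; thus $\psi''(\lambda) > p(1-p)$ for all small $\lambda>0$, and $\psi''$ rises all the way to $1/4$ at the $\lambda$ for which $\tilde p(\lambda)=1/2$. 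Integrating twice gives $\psi(\lambda) > \tfrac12\lambda^2 p(1-p)$ for some $\lambda>0$, the opposite of what you need, so the optimization at the end does not close.

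This is not a patchable hiccup: the lemma you are trying to prove is itself false, which is why no proof of the ``elementary inequality'' can exist. A direct counterexample is $n=1$, $p=0.1$, $x=1-p=0.9$: then $\Pr(X-p\ge x)=\Pr(X=1)=p=0.1$, while $e^{-x^2/(2p(1-p))}=e^{-0.81/0.18}\approx 0.011 < 0.1$. Equivalently, a centered $\mathrm{Ber}(p)$ is \emph{not} sub-Gaussian with variance proxy $p(1-p)$; the optimal sub-Gaussian proxy (Kearns--Saul) is $\frac{1-2p}{2\ln\bigl((1-p)/p\bigr)}$ for $p\neq 1/2$, which strictly exceeds $p(1-p)$ off the symmetric point. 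Note also that the paper does not supply a proof of this lemma (it is only cited as ``a version of Hoeffding's inequality''), and it is not one: Hoeffding for $\{0,1\}$-valued summands gives $e^{-2x^2/n}$, i.e.\ $p(1-p)$ replaced by the worst-case $1/4$, which is strictly \emph{weaker} than the stated claim whenever $p\ne 1/2$.

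Your closing instinct is exactly the right repair. Replace the statement by Bernstein's inequality, $\Pr\bigl(\sum_i X_i - np \ge x\bigr) \le \exp\bigl(-x^2/\bigl(2(np(1-p)+x/3)\bigr)\bigr)$, provable by the same Markov--tensorize--optimize recipe once one uses the correct (quadratic near $0$, linear for large $\lambda$) bound on $\psi$. In the paper's sole invocation, equation~\eqref{eq:prelimpart2apen3}, one has $p = n^{-\gamma}$ with $\gamma>1-\ep$ and deviation $x=\Theta(n^{\ep})$, so $np(1-p)=o(x)$; the Bernstein exponent then collapses to $-\Theta(x)=-\Theta(n^{\ep})$, which comfortably absorbs the $\bar L\log n$ union-bound factor. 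So the paper's downstream argument survives with Bernstein substituted in, but the proof as you wrote it --- and the lemma as stated --- do not.
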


{We now use \eqref{eq:prelimpart2apen3}, to condition the event $\{M_{\vec{z}}\geq n^\ep\}$ on $N_{\vec{z}} \leq (Ln^{\ep})/c$} and follow an analysis similar to \eqref{eq:Mzanalysis}, to arrive at
\begin{align*}
    \Pr\left(\bigcup_{z 
    \in \mathcal{Z}:\gamma(\vec{z})> 1-\ep}\left\{M_{\vec{z}} \geq n^{\ep}\right\}\right) \to 0,
\end{align*}
as $n \to \infty$.

\section{Proof of Lemma~\ref{lem:OverlapNumber}}\label{OverlapNumber}
\OverlapNumber*
\begin{proof}
We use Chebyshev's inequality to see that
\begin{align}
    \text{Pr}(|G(\gamma) - \bar{G}(\gamma)|\geq \epsilon \bar{G}(\gamma)) \leq \frac{\text{Var}(G(\gamma))}{\epsilon^2\bar{G}(\gamma)^2}.
\end{align}
This can be rewritten using the fact that
\begin{align}
    \frac{\text{Var}(G(\gamma))}{\bar{G}(\gamma)^2} = \frac{E[G(\gamma)^2]}{\bar{G}(\gamma)^2} - 1.
\end{align}
We also have that
\begin{align}
    E[G(\gamma)^2] &= E\left[\left(\sum_{i=1}^{K}G(\gamma)_i\right)^2\right] \nonumber\\
    &= \left(\sum_{i=1}^{K}E\left[G(\gamma)_i^2\right]\right) + \sum_{i\neq j}E[G(\gamma)_iG(\gamma)_j] \nonumber\\
    &\leq KE[G(\gamma)_1^2] + K^2E[G(\gamma)_1G(\gamma)_2] \nonumber \\
    & = KE[G(\gamma)_1] + K^2E[G(\gamma)_1G(\gamma)_2] = \bar{G}(\gamma) + K^2E[G(\gamma)_1G(\gamma)_2].
\end{align}
{Recalling that the definition of ``overlap", implies an overlap only between successive reads (Definition~\ref{def:overlap})} and focusing on the second term in the above equation, we have
\begin{align}
    &E[G(\gamma)_1G(\gamma)_2] = \Pr(Y_1,Y_2\text{ both have overlap size $\gamma\log{n}$})\nonumber \\
    &\leq \Pr(Y_1,Y_2\text{ both have overlap size $\gamma\log{n}$}|Y_1 \text{ has no overlap with } Y_2) \nonumber \\
    & \quad + \Pr(Y_1 \text{ has an overlap with } Y_2)
    \nonumber \\
    &\stackrel{(a)}{\leq} \Pr(Y_1,Y_2\text{ both have overlap size $\gamma\log{n}$}|Y_1 \text{ has no overlap with } Y_2) + \frac{k\log{n}}{n} \nonumber \\
    &\stackrel{(b)}{\leq} {\frac{\bar{G}(\gamma)^2}{K^2}} + k\frac{\log{n}}{n},
\end{align}
 where $k$ is a finite constant. 
 Here $(a)$ is because two reads with lengths $\bar{L}\log{n}$ have an overlap if their starting points $T_1,T_2$ are such that $T_2-T_1 \leq L$ and $(b)$ is because 
 the conditioning constrains the starting points to not be such that $T_2-T_1 \leq L$, 
 implying that the events $\{Y_i \text{ has an overlap of size } \gamma \log n\}$ for $i=1,2$ are conditionally independent and with probability at most $E[G(\gamma)_1] = \bar G(\gamma)/K$.
 Therefore we have
\begin{align}
    E[G(\gamma)^2] \leq \bar{G}(\gamma) + \bar{G}(\gamma)^2 + K^2 k \frac{\log{n}}{n}.
\end{align}
Hence,
\begin{align}
    \frac{\text{Var}(G(\gamma))}{\bar{G}(\gamma)^2} &= \frac{E[G(\gamma)^2]}{\bar{G}(\gamma)^2} - 1 \leq \frac{1}{\bar{G}(\gamma)} + 1 -1 +\frac{K^2k\log{n}}{n\bar{G}(\gamma)^2} = \frac{1}{\bar{G}(\gamma)} + \frac{K^2k\log{n}}{n\Theta\left(n^2/\log^4{n}\right)}
    \nonumber \\
    &\stackrel{(a)}{\leq} \frac{n}{(K-1)^2\left(1-\frac{(\Bar{L}-\gamma)\log{n}+1}{n}\right)^{(K-2)}} + {\Theta(n^{-1}\log^3{n})} \nonumber \\
    & \leq \frac{n}{(K-1)^2\left(1-\frac{\Bar{L}\log{n}+1}{n}\right)^{(K-2)}} + {\Theta(n^{-1}\log^3{n})} \nonumber \\
    &= {\Theta\left(\frac{\log^3{n}}{n}\right)},
\end{align}
where (a) is due to the lower bound in \eqref{eq:ExpectedValueUpperLower}, followed by further noting that $K-1 \leq K$. 
Now due to the union bound
\begin{align*}
    \Pr\left(\bigcup_{\gamma \in \Gamma} \left\{ |G(\gamma) - \bar{G}(\gamma)|\geq \epsilon  \bar{G}(\gamma) \right\} \right) \leq \frac{\Bar{L}\log{n}}{\epsilon^2}\cdot {\Theta\left(\frac{\log^3{n}}{n}\right)} \to 0,
\end{align*}
as $n \to \infty.$
\end{proof}

\section{Proof of Lemma~\ref{lem:CostSumEvaluation}}\label{CostSumLem}
\CostSumEvaluation*
From the definition of $\overline{CI}_n$, 
\begin{align}
    \overline{CI}_n &= P \times \prod_{\gamma \leq 1-\ep} B_{3}(\gamma)^{B_4(\gamma)}\times\prod_{\gamma > 1-\ep} n^{\ep B_4(\gamma)} \nonumber \\
    &= (L+1)^{K} \times (1+\ep)^{(1-\ep) \bar{L}\log{n}}n^{(1+\ep)\sum_{\gamma \leq 1 - \ep}(1-\gamma) \bar{G}(\gamma)}\times n^{\ep(1+\ep)\sum_{\gamma > 1-\ep}\bar{G}(\gamma)}.
\end{align}
Hence, we have that 
\begin{align}\label{eq:costfirst}
    &\lim_{n \to \infty}\frac1n\log{\overline{CI}_n}\nonumber \\
    & = \lim_{n \to \infty}\left(\frac{K\log{(L+1)}}{n} + \frac{\bar{L} \log{n}}{n}(1-\ep)\log{(1+\ep)} \right. \nonumber \\
    & \quad \quad \quad \quad \quad \left. +\frac{\log{n}}{n} (1+\ep) \sum_{\gamma \leq 1-\ep}(1-\gamma)\bar{G}(\gamma) + \frac{\ep(1+\ep)\log{n}}{n}\sum_{\gamma > 1-\ep}\bar{G}(\gamma)\right) \nonumber \\
    &= \lim_{n\to\infty} \frac{\log{n}}{n} (1+\ep) \sum_{\gamma \leq 1-\ep}(1-\gamma)\bar{G}(\gamma)
     + \ep (1+\ep) \lim_{n \to \infty}\frac{\log{n}}{n}\sum_{\gamma > 1-\ep}\bar{G}(\gamma).
\end{align}
Notice that second term in \eqref{eq:costfirst} satisfies
\begin{align*}
    0 \leq \ep(1+\ep)\lim_{n \to \infty}\frac{\log{n}}{n}\sum_{\gamma > 1-\ep}\bar{G}(\gamma) \leq \ep(1+\ep) \lim_{n\to\infty}\frac{\log{n}}{n}K = \ep(1+\ep) \frac{c}{\bar{L}} \to 0,
\end{align*}
as $\ep \to 0$.

Now we look at the first term in \eqref{eq:costfirst}. 
Before we proceed to evaluate the required summation, we need to calculate  $\bar{G}(\gamma)$. Calculating this exactly is difficult, but we can find upper and lower bounds that asymptotically converge. 
We start by noticing that
\begin{align*}
    \bar{G}(\gamma) &= K\Pr(X_1 \text{ has an overlap of size } \gamma\log{n}) \\
    &= K\Pr(X_1 \text{ has an overlap of size } \gamma\log{n}|X_1 \text{ starts from position } 1) \\
    &= K\Pr(\text{min start location of }(X_2,\dots,X_K) = (\Bar{L} - \gamma)\log{n}|X_1 \text{ starts from } 1).
\end{align*}
Let's look at $\Pr(\text{min start location of }(X_2,\dots,X_K) = (\Bar{L} - \gamma)\log{n}|X_1 \text{ starts from } 1)$. 
We can upper and lower bound this probability by 
forcing one read to start at position $(\Bar{L} - \gamma)\log{n}$ and all others to start at position $(\Bar{L} - \gamma)\log{n}$ or higher
(which will lead to double counting, and thus an upper bound). 
For the lower bound we can assume that exactly one of the reads starts at $(\Bar{L} - \gamma)\log{n}$ and the rest start at positions strictly greater than $(\Bar{L} - \gamma)\log{n}$. Thus we get
\begin{align}\label{eq:ExpectedValueUpperLower}
  K(K-1)\times\frac{1}{n}\left(1-\frac{(\Bar{L} - \gamma)\log{n}+1}{n}\right)^{K-2} \leq \bar{G}(\gamma) \leq K(K-1)\times\frac{1}{n}\left(1-\frac{(\Bar{L} - \gamma)\log{n}}{n}\right)^{K-2}.
\end{align}
Pick $\Delta_n =\gamma + \frac{1}{\log{n}}-\gamma$. Now we can say that
\begin{align}\label{eq:Complexsumfirststep}
   \lim_{n \to \infty}& \frac{\log{n}}{n}\sum_{\gamma \leq 1-\ep} (1-\gamma)\bar{G}(\gamma)
= \lim_{n \to \infty}\sum_{\gamma \leq 1-\ep}\left((1-\gamma)\times\frac{\log{n}}{n}\times\frac{\bar{G}(\gamma)}{\Delta_n}\Delta_n\right) \nonumber \\
&\stackrel{(a)}{=} \int_{0}^{1-\ep}(1-\gamma)\lim_{n \to \infty}\left(\bar{G}(\gamma)\frac{(\log{n})^2}{n}\right)d\gamma,
\end{align}
where $(a)$ follows from the definition of Riemann integration.
Let us evaluate the limit inside the integral first. 
Applying the Sandwich theorem to \eqref{eq:ExpectedValueUpperLower}, we have that
\begin{align*}
  \lim_{n \to \infty}\bar{G}(\gamma)\frac{(\log{n})^2}{n} = \frac{c^2}{\Bar{L}^2}\exp\left(-c\left(1-\frac{\gamma}{\Bar{L}}\right)\right).
\end{align*}
Therefore from \eqref{eq:Complexsumfirststep} 
we have that
\begin{align*}
\lim_{n \to \infty}& \frac{\log{n}}{n}\sum_{\gamma \leq 1-\ep} (1-\gamma)\bar{G}(\gamma)
= \int_{0}^{1-\ep}(1-\gamma)\lim_{n \to \infty}\left(\bar{G}(\gamma)\frac{(\log{n})^2}{n}\right)d\gamma \nonumber \\
    &= \frac{c^2}{\Bar{L}^2} \int_{0}^{1-\ep}(1-\gamma)\exp{\left(-c\left(1-\frac{\gamma}{\Bar{L}}\right)\right)}d\gamma \\
    & \leq \frac{c^2}{\Bar{L}^2} \int_{0}^{1}(1-\gamma)\exp{\left(-c\left(1-\frac{\gamma}{\Bar{L}}\right)\right)}d\gamma \\
    &= \frac{c^2}{\Bar{L}^2}e^{-c\left(1-\frac{1}{\Bar{L}}\right)}\int_{0}^{1}ze^{-\frac{c}{\Bar{L}}z}dz \\
    &= e^{-c\left(1-\frac{1}{\Bar{L}}\right)}\left(1-\left(\frac{c}{\Bar{L}}+1\right)e^{-\left(\frac{c}{\Bar{L}}\right)}\right) = \left(e^{-c\left(1-\frac{1}{\Bar{L}}\right)}-\left(\frac{c}{\Bar{L}}+1\right)e^{-c} \right),
\end{align*}
where we used the substitution $z=1-\gamma$.
Finally, plugging everything back into \eqref{eq:costfirst}, we obtain
\aln{
    \lim_{n \to \infty}\frac1n\log{\overline{CI}_n} & \leq 
    (1+\ep) \left(e^{-c\left(1-\frac{1}{\Bar{L}}\right)}-\left(\frac{c}{\Bar{L}}+1\right)e^{-c} \right) + 
    \ep(1+\ep)\frac{c}{\bar L} \\
    & = e^{-c\left(1-\frac{1}{\Bar{L}}\right)}-\left(\frac{c}{\Bar{L}}+1\right)e^{-c}  + f(\ep),
}
where $f(\ep) \to 0$ as $\ep \to 0$, concluding the proof.

\section{Proof of Lemma~\ref{lem:MaxIslandsBound}}\label{MaxIslandBound}
\MaxIslandsBound*

We first upper bound the $\Pr\left(D > \gamma_0 \log{n}\right)$ as
\begin{align}\label{eq:Lemma3UnionBd}
    &\Pr\left(D > \gamma_0 \log{n}\right) \nonumber \\
    &\leq K \Pr\left(\text{No. of samples in a given island} > \gamma_0 \log{n}\right)
\end{align}
and then show that this upper bound tends to zero as $n \to \infty.$
We first define a few terms.
Let the sequence $\vec{Y}_1,\vec{Y}_2,\dots$ represent the reads (in an ordered fashion) in the given island. We define $U_i := T_{i+1}-T_i$ with $T^K$ being the vector of ordered starting locations, as the separation between read $\vec{Y}_i$ and $\vec{Y}_{i+1}$. We can thus think of the sampling scheme as a random process that picks starting locations $T^K$ with inter-arrival times $U_1,U_2, \dots, U_{K-1}$. For convenience, we define $U_K = ((T_1 - T_K) \mod n) + 1$ (this is to capture the wrap around nature of the reads). Note that $\sum_{i=1}^KU_i = n$.

Thus we calculate $\Pr\left(\text{A given read has an overlap}\right)$ as follows,
\begin{align}\label{eq:ReadsWithoutOverlaps}
     \Pr\left(\text{A given read has an overlap}\right) &= 1 - \Pr\left(\text{A given read has no overlap}\right) \nonumber \\
    &=1 - \Pr(\text{A particular read does not overlap with the given read})^{K-1} \nonumber \\
    &= 1 - \left(1-\frac{L}{n}\right)^{K-1}.
\end{align}
We now use the above quantity to calculate \eqref{eq:Lemma3UnionBd}.
We note that
\begin{align}
    &\Pr\left(\text{No. of samples in a given island} > \gamma_0 \log{n}\right) \nonumber \\
    &= \Pr\left(\text{No. of samples in a given island} > \gamma_0 \log{n} \middle|\text{given island starts from $\vec{Y}_1$} \right)  \nonumber\\
    &{=}\Pr(U_1\leq L,U_2 \leq L,\dots, U_{\gamma_0\log{n}-1} \leq L) \nonumber \\
    &= \Pr(U_1 \leq L)\Pr(U_2 \leq L|U_1 \leq L)\dots\Pr( U_{\gamma_0\log{n}-1}|U_1\leq L,U_2 \leq L\dots U_{\gamma_0\log{n}-2} \leq L) \nonumber \\
    &\stackrel{(a)}{\leq} \Pr(U_1 \leq L)\Pr(U_2 \leq L)\dots\Pr( U_{\gamma_0\log{n}-1} \leq L) \nonumber \\
    &= \left(1 - \left(1-\frac{L}{n}\right)^{K-1}\right)^{\gamma_0\log{n}-1},
\end{align}
where $(a)$ is because the $U_i,i\in[1:K]$ are negatively associated (because the larger one $U_i$ is, the less room there is on $x^n$ for the other $U_j$'s). 
For negatively associated random variables we know that \cite{negassociated}
\begin{align}
    \Pr(U_i\leq L:i \in [1:K]) \leq \Pi_{i=1}^{K}\Pr(U_i \leq L)
\end{align}
Therefore \eqref{eq:Lemma3UnionBd} can be upper bounded as
\begin{align}\label{eq:MaxBoundUpperBound}
    \Pr\left(D \geq \gamma_0 \log{n}\right) &\leq K \left(1 - \left(1-\frac{L}{n}\right)^{K-1}\right)^{\gamma_0\log{n}-1}  \nonumber\\
    &=\frac{c\times2^{\log{n} + (\gamma_0\log{n}-1)\log\left(1 - \left(1-\frac{L}{n}\right)^{K-1}\right)}}{\bar{L}\log{n}} \nonumber\\
    &= \frac{c\times2^{\log{n}\left(1+\gamma_0\log\left(1 - \left(1-\frac{L}{n}\right)^{K-1}\right)\right) - \log\left(1 - \left(1-\frac{L}{n}\right)^{K-1}\right)}}{\bar{L}\log{n}}.
\end{align}
Now as long as
\begin{align*}
    &\lim_{n \to \infty} \left(1+\gamma_0\log\left(1 - \left(1-\frac{L}{n}\right)^{K-1}\right)\right) < 0 \text{ or } \\
    &\gamma_0 > \frac{-1}{\log{(1-e^{-c})}},
\end{align*}
\eqref{eq:MaxBoundUpperBound} $\to 0$ as $n \to \infty$.
\section{Finiteness of Island lengths}\label{FiniteIslandLength}
We are required to prove that $(a)$ $\lim_{n \to \infty} \frac{\log{n}}{E[N_1]} \in (0,\infty)$ and $(b)$ $E[N_1^2/(\log{n})^2]$ is finite and bounded.
We note that $N_i = \sum_{i=1}^{J}Z_i$, where $J$ is the random variable which indicates the number of reads in an island and $Z_i$ is the length of the reads after removing the overlapping part of the read.
To see $(a)$ note that there are $K^{\prime}$ islands. Define $J_1,J_2,\dots J_{K^{\prime}}$ as the number of reads in each of these islands. Note that $J$ and $J_i$ are identically distributed for all $i$.
Therefore we can say that
\begin{align*}
    K = \sum_{i=1}^{K^{\prime}}J_i.
\end{align*}
Note that $K'$ is a stopping time with respect to $J_1,J_2,\dots$. 
This implies that $E[K']E[J] = K$. Also note that $J$ is a stopping time with respect to $Z_1,Z_2\dots$. Therefore we can say
\begin{align*}
    \lim_{n \to \infty} \frac{\log{n}}{E[N_1]} &= \lim_{n \to \infty}  \frac{\log{n}}{E[J]E[Z_1]} =\lim_{n \to \infty} \frac{E[K']\log{n}}{KE[Z_1]} \\
    &{= \left(\lim_{n \to \infty}\frac{E[K']}{K}\right)\left(\lim_{n \to \infty}\frac{\log{n}}{E[Z_1]}\right) \stackrel{(a)}{=} \frac{ce^{-c}}{\bar{L}}d\in (0,\infty),}
\end{align*}
{where $d$ is a fixed constant, since $E[Z_1] \sim \Theta(\log{n}))$.}
 To handle $(b)$ we note the following
\begin{align}\label{eq:SecondMomentBound}
    E[N_i^2] &= E\left[\left(\sum_{i=1}^{J}Z_i\right)^2\right] {\leq E\left[\left(\sum_{i=1}^{J}L\right)^2\right] = E[J^2L^2] =  E[J^2]L^2,}
\end{align}
since each of $Z_i \leq L$ for all $i$.
Let us look at the distribution of $J$.
We notice that
\begin{align*}
    &\Pr(J = i) = \Pr(J=i|\text{Island starts from first read}) \\
    & = \Pr(U_1\leq L, \dots, U_{i-1} \leq L ,U_i > L) \nonumber \\
    &\leq \Pr(U_1\leq L, \dots, U_{i-1} \leq L)\nonumber \\
    &\stackrel{(a)}{\leq} \Pr(U_1 \leq L)\Pr(U_2 \leq L)\dots \Pr(U_{i-1} \leq L) \nonumber \\
    &\leq \left(1-\left(1-\frac{L}{n}\right)^{K-1}\right)\times\left(1-\left(1-\frac{L}{n}\right)^{K-1}\right)\dots\left(1-\left(1-\frac{L}{n}\right)^{K-1}\right) \\
    &= \left(1-\left(1-\frac{L}{n}\right)^{K-1}\right)^{i-1},
\end{align*}
where $(a)$ is due to the fact that $U_i,i\in[1:K]$ are negatively associated.
Now we can upper bound $E[J^2]$ as
\begin{align*}
    &E[J^2] = \sum_{i=1}^{K}i^2\Pr(J=i) \\
    &\leq \sum_{i=1}^{K}i^2\left(1-\left(1-\frac{L}{n}\right)^{K-1}\right)^{i-1} \\
    &\stackrel{(a)}{\leq} \sum_{i=1}^{K}i^2\left(1 - \exp{\left(\frac{-\frac{LK}{n}}{\left(1 - \frac{L}{n}\right)}\right)}\right)^{i-1} \stackrel{(b)}{\leq} \sum_{i=1}^{K}i^2\left(1 - \exp{\left(-\frac{c}{\left(1 - \frac{\Bar{L}\log{2}}{2}\right)}\right)}\right)^{i-1},
\end{align*}
where $(a)$ is since $1-x \geq e^{\frac{-x}{1-x}}$ and $(b)$ is due to the max of $\log{n}/n$ being $\log{2}/2$ for $n \in \mathbb{N}$. 

The above series converges to a finite value. To see this, note that this summation is of the form $\sum_{i=1}^K i^2 \alpha^i$, for $\alpha \in (0,1)$, which converges by the root test.
Therefore $E[J^2]$ has to have a finite bound (let this be $M$). This implies \eqref{eq:SecondMomentBound} can be upper bounded as
\begin{align*}
    E[N_1^2] \leq M \Bar{L^2}(\log{n})^2.
\end{align*}

This implies that $E[N_1^2/(\log{n})^2]$ is bounded and finite.

\section{Proof of equation~\ref{eq:ApparantIslandLength}}\label{ApparantIslandLengthAppendix}
We aim to prove \eqref{eq:ApparantIslandLength} here. First note that $E[K^{\prime\prime}]$ can be calculated as 
\begin{align}\label{eq:ApparantIslandNumber}
    E[K^{\prime\prime}] &= K\Pr(\text{Read has overlap size }\leq \delta\log{n}) \nonumber \\
    & = K\Pr(U_1\geq L-\delta\log{n}) = K\left(1-\frac{L-\delta\log{n}}{n}\right)^{K-1}.
\end{align}
It is easy to see that $\lim_{n \to \infty} \frac{\log{n}}{n}E[K^{\prime\prime}] = \frac{c}{\Bar{L}}e^{-c\sigma}$.
Note that the formation of apparent islands can be interpreted as equivalently finding real islands with read lengths truncated by $\delta\log{n}$ (except for the last read in the island). The quantity $\lim_{n \to \infty}\frac{1}{n}E[K^{\prime\prime}]E[N_1^{\delta}]$ is just the coverage of this modified expression. But since the last read is still size $L$, we can think of that read in two parts, one part which contributes $L - \delta\log{n}$ and the other $\delta\log{n}$. Therefore the total coverage is
\begin{align}\label{eq:ApparantCoverage}
    1-e^{-c\sigma} + c(1-\sigma)e^{-c\sigma}.
\end{align}
This is because the reads of length $L- \delta\log{n}$ contribute to a coverage of $1-e^{-c\sigma}$ (the effective coverage depth is shortened). Now the additional $\delta\log{n}$ contribute individually an extra length for each island, but this only needs to be added for the last read. Since there are $K''$ islands, the cumulative contribution is $\lim_{n \to \infty}\frac{1}{n}E[K''](\delta\log{n}) = c(1-\sigma)e^{-c\sigma}$.

Therefore from \eqref{eq:ApparantIslandNumber} and \eqref{eq:ApparantCoverage}, we can say that
\begin{align}
    \lim_{n \to \infty}\left(\frac{1}{n}E[K^{\prime\prime}]E[N_1] - E[K
    ^{\prime\prime}]\frac{\log{n}}{n} \right) = (1-e^{-c\sigma}) + c(1-\sigma)e^{-c\sigma} - \frac{c}{\Bar{L}}e^{-c\sigma}
\end{align}
\end{document}